
\documentclass[12pt, reqno]{amsart}

\usepackage{amsmath}
\usepackage{amssymb} 
\usepackage{array}
\usepackage{enumitem}
\usepackage{color}
\definecolor{battleshipgrey}{rgb}{0.52, 0.52, 0.51} 
\usepackage{hyperref}
\hypersetup{
    bookmarks=true,         
    unicode=false,          
    pdftoolbar=true,        
    pdfmenubar=true,        
    pdffitwindow=false,     
    pdfstartview={FitH},    
    pdftitle={Essay on axioms of quantum mechanics. II},    
    pdfauthor={Wolfgang Bertram},     
    pdfsubject={Subject},   
    pdfcreator={Creator},   
    pdfproducer={Producer}, 
    pdfkeywords={keyword1} {key2} {key3}, 
    pdfnewwindow=true,      
    colorlinks=true,       
    linkcolor=battleshipgrey,          
    citecolor=battleshipgrey,        
    filecolor=battleshipgrey,      
    urlcolor=battleshipgrey          
}
\usepackage{verbatim} 

\usepackage{pstricks-add}
\usepackage{pst-func}

\addtolength{\textheight}{2cm}
\addtolength{\hoffset}{-1cm}
\addtolength{\textwidth}{2.25cm}
\addtolength{\voffset}{-1.5cm}

\theoremstyle{plain}
\newtheorem{theorem}{Theorem}[section]
\newtheorem{lemma}[theorem]{Lemma}
\newtheorem{proposition}[theorem]{Proposition}

\newtheorem{definition}[theorem]{Definition}
\theoremstyle{remark}
\newtheorem{remark}{Remark}[section]
\newtheorem{example}{Example}[section]

\numberwithin{equation}{section}


\newcommand{\bA}{\mathbb{A}}

\newcommand{\K}{\mathbb{K}}

\newcommand{\R}{\mathbb{R}}
\newcommand{\bC}{\mathbb{C}}

\newcommand{\Z}{\mathbb{Z}}

\newcommand{\PP}{\mathbb{P}}

\newcommand{\bV}{\mathbb{V}}


\newcommand{\cS}{{\mathcal S}}

\newcommand{\cH}{\mathcal{H}}

\newcommand{\cR}{\mathcal{R}}
\newcommand{\cO}{\mathcal{O}}

\newcommand{\g}{\mathfrak{g}}

\newcommand{\uu}{\mathfrak{u}}

\newcommand{\fS}{\mathfrak{S}}



\newcommand{\kk}{\sl{k}} 

\newcommand{\Gl}{\mathrm{Gl}}
\newcommand{\GL}{\mathrm{GL}}
\newcommand{\SL}{\mathrm{SL}}
\newcommand{\UU}{\mathrm{U}}

\newcommand{\SO}{\mathrm{SO}}

\newcommand{\Herm}{\mathrm{Herm}}

\newcommand{\Bij}{\mathrm{Bij}}

\newcommand{\id}{\mathrm{id}}
\newcommand{\Aherm}{\mathrm{Aherm}}

\newcommand{\Ad}{\mathrm{Ad}}

\newcommand{\Graph}{\mathrm{Graph}}

\newcommand{\CR}{\mathrm{CR}}
\newcommand{\tr}{\mathrm{trace}}


\newcommand{\ant}{\mathbf{\infty}}

\newcommand{\msk}{\medskip}
\newcommand{\ssk}{\smallskip}
\newcommand{\nin}{\noindent}


%
\begin{document}

 \title[An Essay on the completion of Quantum Theory. II]{An Essay on the completion of Quantum Theory.
 \\  II:  Unitary time evolution} 


\author{Wolfgang Bertram}

\address{Institut \'{E}lie Cartan de Lorraine \\
Universit\'{e} de Lorraine at Nancy, CNRS, INRIA \\
B.P. 70239 \\
F-54506 Vand\oe{}uvre-l\`{e}s-Nancy Cedex, France}

\email{\url{wolfgang.bertram@univ-lorraine.fr}}

\begin{abstract}
In  this second part of the ``essay on the completion of quantum theory'' we define the {\em unitary setting of completed quantum mechanics},
by adding as intrinsic data  to those from Part I (\cite{Be17}) the choice of a {\em north pole $N$ and south pole $S$} in the geometric space $\cS$.
Then we 
explain that, in the unitary  setting, a  complete observable
corresponds to a
 right (or left) invariant vector field (Hamiltonian field)  on the geometric space, and {\em unitary time evolution} is the flow of such a vector field.
This interpretation is in fact nothing but the Lie group-Lie group algebra correspondence, for a geometric space that can be interpreted as the
{\em Cayley transform} of the usual, Hermitian operator space. In order to clarify the geometric nature of this setting, we realize the 
Cayley transform as a member of a natural {\em octahedral group} that can be associated to any triple of pairwise transversal elements.  
 \end{abstract}

\subjclass[2010]{ 
46L89,  
51M35 ,	
58B25,  	
81P05, 
81R99,  	
81Q70.  	
}

\keywords{(geometry of) quantum mechanics,  Lie torsor, unitary group, Cayley transform, octahedral symmetry,  projective line, 
Jordan-Lie algebras, (self) duality.
}

\maketitle

\vskip 10mm

\begin{center}
{\sl
Again, dedicated to the memory of  \href{http://www.iecl.univ-lorraine.fr/~Wolfgang.Bertram/NachrufBrandes.pdf}{Tobias}.}
\end{center}

\bigskip

\section{Introduction}

\subsection{Mathematical core of axiomatics:   Jordan-Lie algebras}
{\em It is an important feature of quantum mechanics that the physical variables play a dual role, as observabes
{\em and} as generators of transformation groups....
}\footnote{Alfsen and Schultz, \cite{AS}, p.vii }
Actually, this important feature  shows up both in classical and in quantum mechanics: 
classically, a function $H$ plays the role of an observable,
and its associated Hamiltonian vector field $\xi_H$ is the generator of time evolution.
In quantum mechanics, a Hermitian operator $H$ plays the role of an observable, and its associated 
skew-Hermitian operator $i H$, or rather $X_H:=\frac{1}{ i \hbar} H$, is the generator of time evolution:  this is expressed by  the 
general form of the 
\href{https://en.wikipedia.org/wiki/Schrodinger_equation}{Schr\"odinger  equation}
\begin{equation}\label{eqn:Sch}
\partial_t \psi = - \frac{i}{\hbar} H \psi   .
\end{equation}
When $H$ is seen as observable, we consider it as element of the {\em Jordan algebra} $\Herm(\cH)$, 
and when we see it as generator  of  a
transformation group, we consider it as an element of the {\em Lie algebra} $i \Herm(\cH)$. 
To understand both aspects simultaneously, we must regard the space $\Herm(\cH)$ both as Lie and Jordan algebra: both
structures are compatible and define what one calls a {\em Jordan-Lie algebra (with positive Jordan-Lie constant)}.
Thus, for developing an axiomatic theory, and for understanding the mathematical core of what is going on, it
seems that Jordan-Lie algebras are the correct starting point -- this point of view has been advocated by Emch, see \cite{E}, and also \cite{L98}.
The first chapter of this text gives a self-contained introduction to Jordan-Lie algebras; some additional material  can be found in
an appendix (appendix \ref{app:JLA}).

\subsection{Physics: on ``pictures''.}
A striking feature of unitary time evolution in Quantum Mechanics is, as every student learns, that there are
two, or three, ``pictures'' describing it: the 
\href{https://en.wikipedia.org/wiki/Schrodinger_picture}{\em Schr\"odinger picture} (states evolve, observables are constant),
the \href{https://en.wikipedia.org/wiki/Heisenberg_picture}{\em Heisenberg picture} (states are constant, observables evolve), and the 
\href{https://en.wikipedia.org/wiki/Interaction_picture}{\em interaction picture}
(both evolve). Naively, one may ask: which one is the ``correct'' one, or the ``most realistic'' one?
As one learns, all are ``correct'':   they are mathematically equivalent.
Indeed, already on
the level of classical mechanics, there are two such pictures, the ``Hamilton picture'' and the ``Liouville picture''
(cf.\  \cite{Tak08}, p. 59-60).
The Liouville-Schr\"odinger picture is the perception, say,  of a newspaper reader,
 that the ``state of the world'' evolves, and we sit in our armchair and ``observe it evolving''.
The Hamilton-Heisenberg picture rather describes the world as a vast landscape, seen by an observer sitting
 in a fast train and watching the
landscape going by the window: our perspective, aka observable, changes every second, but we know that the landscape outside is
stable and immobile... Both viewpoints are right -- change is real, but who moves? Duality is real, too: we need to have (at least) two categories of 
stuff, if we  want to speak of one changing with respect to another.

\subsection{Yet another picture}
In the framework of ``completed quantum theory'', I will give another picture of time evolution,
again mathematically equivalent to the Schr\"odinger or Heisenberg picture.
The basic idea is simple, and familiar to all mathematicians and physicists having some working knowledge in 
{\em Lie groups}. 
Namely, recall that time evolution based on the Schr\"odinger equation (\ref{eqn:Sch}), applied to mixed states $W$ (density matrices)
takes the form
\begin{equation}\label{eqn:Sch2}
t \mapsto W_t=   
 e^{- \frac{it}{\hbar} H} W e^{ \frac{it}{\hbar} H} ,
\end{equation}
where $e^{ \frac{it}{\hbar} H}$ is a {\em unitary} operator. Moreover, up to replacing $t$ by $-t$, 
\href{https://en.wikipedia.org/wiki/Heisenberg_picture#Summary_comparison_of_evolution_in_all_pictures}{\em
this is the same form as time evolution of observables in the Heisenberg picture}.
This fact stresses again that observables and density matrices behave dually to each other, 
and it describes precisely the {\em action of the unitary group $\UU(\cH)$ by conjugation on its Lie algebra $i\Herm(\cH)$},
respectively on its dual.
In mathematical language, {\em usual time evolution is the adjoint, respectively the coadjoint, representation of the (infinite dimensional)
Lie group $\UU=\UU(\cH)$ on its Lie algebra, resp.\ on its dual space}.
(Well, there still is an additional factor $\hbar$ about which we have to speak later...) 

\ssk
From this viewpoint, 
the next step will appear natural:  the adjoint representation comes from a global and geometric action of the 
Lie group $\UU$,  namely its {\em left and right action  on itself}.  
In the spirit of ``delinearization of quantum mechanics'' (cf.\ Part I), it seems thus natural 
to interprete {\em time evolution as a flow coming from left (or right) action of the unitary group on itself}.  
Just like for any Lie group,  the adjoint and coadjoint action  are certain means
to describe the left and right action by {\em trivializing the tangent bundle $T\UU$ of $\UU$}, hence are certain
``pictures'' of a more geometric setting. 
The reason why this geometric setting is never emplyoed in ``usual'' quantum mechanics is simply that usual quantum mechanics 
requires observables or density matrices to live in a {\em linear} space, like a Lie or a Jordan algebra, whereas the group
$\UU$ itself is a ``geometric'' and  ``non-linear'' object. But this is precisely the point where the setting of completed quantum theory
comes in: the upshot is  that the completed space $\cR$ of our theory (Part I) can be identified with $\UU$;
the real universe $\cR$ ``is a group'' (albeit without origin singled out: a ``group without fixed origin'', sometimes also called
a {\em torsor}). 
This feature, too, is in principle well-known: namely, via the
\href{https://en.wikipedia.org/wiki/Cayley_transform#Operator_map}{\em Cayley transform}, the set of Hermitian operators can
be identified with an (open) subset of the unitary group $\UU = \UU(\cH)$. 
The important question is therefore: {\em in what sense can we consider this identification as ``canonical''?
What does the identification imply for the shape of the theory?}
Our answer to this question can be summarized as follows:
\begin{itemize}
\item
in completed, geometric quantum theory, the unitary group structure on $\cR$ is canonical, {\em provided the pair $(N,S)$, where $N$ is the
 ``north pole''  and $S$ the ``south pole'', is considered as fixed datum of our theory} (both poles lie in the ``complex universe'' $\cS$, but {\em outside}
  the real form $\cR$),
\item
in ``ordinary'', linear quantum theory, the unitary group structure on $\cR$
 can only be seen indirectly, because the points $0$ and $\infty$ are not fixed under the geometric action:
  one needs some transformation relating the various ``pictures'', the most important being the Cayley transform.
The Cayley transform itself is not ``canonical''; it's just  a tool -- but a tool whose ``geometry'' we have to analyze.
\end{itemize}

\subsection{Image of the picture}
To fix ideas, and to provide an image, let us consider the case of the (commutative) one-dimensional $C^*$-algebra
$\bA = \bC$, whose completion is the {\em Riemann sphere} $\cS=\bC\PP^1$.
\begin{figure}[h]
\caption{The Riemann sphere $\cS$ with six poles
 $(N,S,O,\infty,F,B)$.}\label{fig:RS}
\newrgbcolor{eqeqeq}{0.8784313725490196 0.8784313725490196 0.8784313725490196}
\newrgbcolor{aqaqaq}{0.6274509803921569 0.6274509803921569 0.6274509803921569}
\psset{xunit=0.6cm,yunit=0.6cm,algebraic=true,dimen=middle,dotstyle=o,dotsize=5pt 0,linewidth=0.8pt,arrowsize=3pt 2,arrowinset=0.25}
\begin{pspicture*}(-9.68,-4.78)(13.66,4.94)
\psplotImp[linewidth=1.8pt](-11.0,-8.0)(14.0,7.0){-1.0+1.0*y^2+0.0625*x^2}
\psplotImp(-11.0,-8.0)(14.0,7.0){-1.0+0.0625*y^2+0.5*x^2}
\psplotImp(-11.0,-8.0)(14.0,7.0){-16.0+1.0*y^2+1.0*x^2}
\psplotImp(-11.0,-8.0)(14.0,7.0){-1.0+0.0625*y^2+0.08333333333333333*x^2}
\psplot[linecolor=lightgray]{-9.68}{13.66}{(--2.--0.4294487507354019*x)/1.024295039463181}
\psplot[linecolor=lightgray]{-9.68}{13.66}{(--3.930370166695838-0.42694233953616334*x)/-4.810276799130294}
\psplot[linecolor=lightgray]{-9.68}{13.66}{(-3.9303701666958424--1.4512373789993456*x)/-2.060903212636129}
\begin{scriptsize}
\psdots[dotsize=8pt 0,dotstyle=*,linecolor=darkgray](0.,-4.)
\rput[bl](-0.08,-4.68){\darkgray{$S$}}
\psdots[dotsize=8pt 0,dotstyle=*,linecolor=darkgray](0.,4.)
\rput[bl](-0.12,4.34){\darkgray{$N$}}
\psdots[dotsize=8pt 0,dotstyle=*,linecolor=darkgray](-1.374686793247079,-0.9390898592677543)
\rput[bl](-1.1,-0.52){\darkgray{$F$}}
\psdots[dotsize=8pt 0,dotstyle=*,linecolor=darkgray](1.3746867932470865,0.9390898592677545)
\rput[bl](1.46,1.26){\darkgray{$B$}}
\psdots[dotsize=8pt 0,dotstyle=*,linecolor=darkgray](-3.435590005883215,0.5121475197315905)
\rput[bl](-3.76,0.8){\darkgray{$O$}}
\psdots[dotsize=8pt 0,dotstyle=*,linecolor=darkgray](3.435590005883215,-0.512147519731591)
\rput[bl](3.76,-0.8){\darkgray{$\infty$}}
\psdots[dotsize=2pt 0,dotstyle=*,linecolor=darkgray](-0.040457868926124955,1.9355999417622334)
\rput[bl](0.04,2.06){\darkgray{$B'$}}
\psdots[dotsize=2pt 0,dotstyle=*,linecolor=darkgray](-8.379988211529422,-1.5608544482033058)
\rput[bl](-8.3,-1.44){\darkgray{$F'$}}
\end{scriptsize}
\end{pspicture*}
\end{figure}
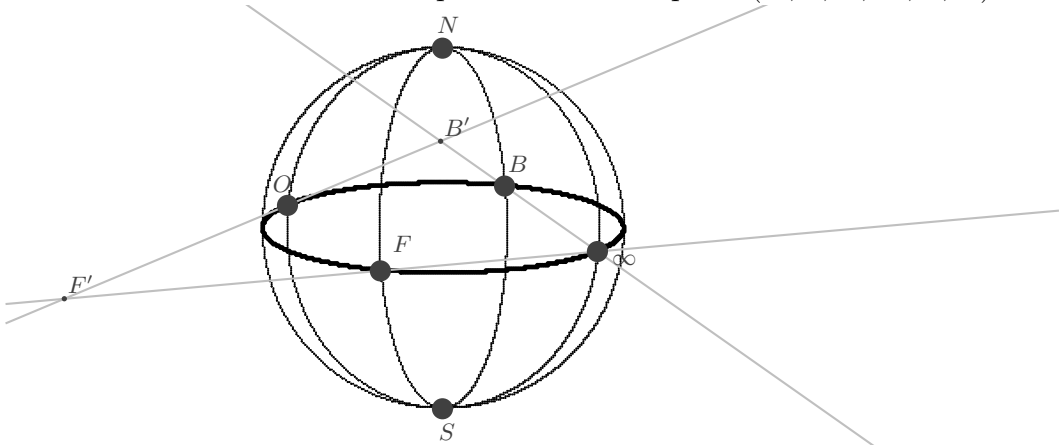
Its real form $\cR$ is the equator, a circle.
The Hermitian operators are just the reals $\R$ inside $\bC$,  the unitary group of $\bC$ is the circle, $\UU = S^1$,
and the classical complex Cayley transform takes the reals into $S^1$: there is just one point missing to complete the image. 
For sake of illustration, 
we'll do something foolish, namely represent the Riemann sphere  together with   4 poles called
{\em north, south, west, and east pole}, denoted by $N,S,W=\infty,O$.
Even more foolish, add another two poles, the {\em front pole} $F$ and the {\em back pole} $B$.
The poles $O,F,W,B$ lie on the {\em equator}, which is the horizontal
circle, model of the real projective line $\cR=\R\PP^1$, the completion of the
real line $\R = \Herm(\bC)$. By stereographic projection from  $\infty$,
the equator (taken out $\infty$)  is identified with the tangent
line at the equator at $O$ (the images of $F,B$ under this projection are denoted by $F',B'$; in fact, the whole 
sphere, taken out $\infty$, is identified with the tangent plane of $\cS$ at $O$; the points $N$ and $S$ then
correspond to $i F'$ and $-i B'$). 

\ssk
What are the relevant groups acting here, and by what are they determined?
The biggest relevant group is the {\em conformal group}  $\PP\SL(2,\bC)$, acting by conformal transformations of the sphere;
next comes its subgroup $\PP\SL(2,\R)$, fixing the real form given by the equator.
To single out a rotation group $\UU$, we may choose an arbitrary point $N$ of the sphere not belonging to the equator, call it ``north pole'', and call
its complex conjugate $S=N^*$  ``south pole''.
The data $(\cR; N,S)$ determine a unique rotation group $\UU= \PP\SO(2)\subset \PP \SL(2,\R)$ acting 
simply transitively on the equator $\cR$. 
The east and the west poles are of course not fixed under this action: they are used to define the linear picture
by stereographic projection, but they are not an invariant datum. 
All of this  generalizes to  ``completed quantum theory'' as defined in Part I: 
in order to describe the unitary group action, the pair $(N,S)$ seems to be a much more natural ``reference pair'' than
 $(0,\infty)$ (or any other pair of opposite points on the equator, like $(F,B)$). 
It is important that the invariant pair $(N,S)$ lies {\em outside} the real universe: in fact, 
in the chart given by stereographic projection, it is purely imaginary.  
Whenever we want to translate between ``usual'' and ``completed'' quantum theory, the following geometric picture will be useful: 
we consider the three pairs of vertices
$(N,S), (O,W), (F,B)$ as vertices of an octahedron. 
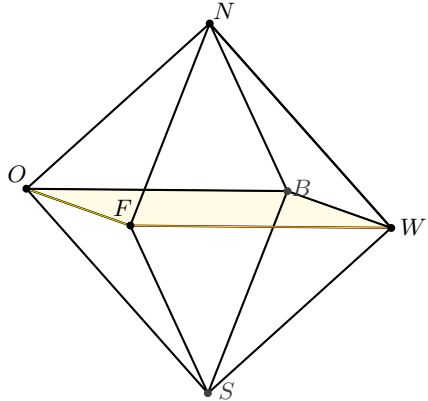
\begin{figure}[h]\caption{The octahedron $(N,S; O,W; F,B)$} \label{fig:octahedron}
\newrgbcolor{ttqqtt}{0.2 0. 0.2}
\newrgbcolor{ffdxqq}{1. 0.8431372549019608 0.}
\psset{xunit=0.6cm,yunit=0.5cm,algebraic=true,dimen=middle,dotstyle=o,dotsize=5pt 0,linewidth=0.8pt,arrowsize=3pt 2,arrowinset=0.25}
\begin{pspicture*}(-2.6636363636363605,-6.121818181818184)(13.096363636363648,4.998181818181821)
\pspolygon[linewidth=0.4pt,linecolor=ffdxqq,fillcolor=ffdxqq,fillstyle=solid,opacity=0.1](4.816363636363643,-1.3218181818181818)(2.516363636363642,-0.3418181818181814)(8.296363636363642,-0.40181818181818146)(10.596363636363643,-1.3818181818181818)
\psline(4.816363636363643,-1.3218181818181818)(2.516363636363642,-0.3418181818181814)
\psline[linecolor=ttqqtt](4.816363636363643,-1.3218181818181818)(10.596363636363643,-1.3818181818181818)
\psline(4.816363636363643,-1.3218181818181818)(6.576363636363643,4.05818181818182)
\psline(6.576363636363643,4.05818181818182)(2.516363636363642,-0.3418181818181814)
\psline(6.576363636363643,4.05818181818182)(10.596363636363643,-1.3818181818181818)
\psline(4.816363636363643,-1.3218181818181818)(6.536363636363643,-5.781818181818185)
\psline(8.296363636363642,-0.40181818181818146)(6.536363636363643,-5.781818181818185)
\psline(6.576363636363643,4.05818181818182)(8.296363636363642,-0.40181818181818146)
\psline[linewidth=0.4pt,linecolor=ffdxqq](4.816363636363643,-1.3218181818181818)(2.516363636363642,-0.3418181818181814)
\psline[linewidth=0.4pt,linecolor=ttqqtt](2.516363636363642,-0.3418181818181814)(8.296363636363642,-0.40181818181818146)
\psline[linewidth=0.4pt,linecolor=ttqqtt](8.296363636363642,-0.40181818181818146)(10.596363636363643,-1.3818181818181818)
\psline[linewidth=0.4pt,linecolor=ffdxqq](10.596363636363643,-1.3818181818181818)(4.816363636363643,-1.3218181818181818)
\psline(2.516363636363642,-0.3418181818181814)(6.536363636363643,-5.781818181818185)
\psline(10.596363636363643,-1.3818181818181818)(6.576363636363643,4.05818181818182)
\psline(10.596363636363643,-1.3818181818181818)(6.536363636363643,-5.781818181818185)
\psline(10.596363636363643,-1.3818181818181818)(8.296363636363642,-0.40181818181818146)
\psline(2.516363636363642,-0.3418181818181814)(8.296363636363642,-0.40181818181818146)
\begin{scriptsize}
\psdots[dotsize=3pt 0,dotstyle=*,linecolor=black](4.816363636363643,-1.3218181818181818)
\rput[bl](4.436363636363643,-1.0618181818181818){{$F$}}
\psdots[dotsize=3pt 0,dotstyle=*,linecolor=black](2.516363636363642,-0.3418181818181814)
\rput[bl](2.0963636363636417,-0.16181818181818136){{$O$}}
\psdots[dotsize=3pt 0,dotstyle=*,linecolor=black](10.596363636363643,-1.3818181818181818)
\rput[bl](10.796363636363644,-1.5618181818181818){{$W$}}
\psdots[dotsize=3pt 0,dotstyle=*,linecolor=black](6.576363636363643,4.05818181818182)
\rput[bl](6.656363636363643,4.17818181818182){{$N$}}
\psdots[dotsize=3pt 0,dotstyle=*,linecolor=darkgray](8.296363636363642,-0.40181818181818146)
\rput[bl](8.396363636363644,-0.5218181818181816){\darkgray{$B$}}
\psdots[dotsize=3pt 0,dotstyle=*,linecolor=darkgray](6.536363636363643,-5.781818181818185)
\rput[bl](6.77,-5.941818181818183){\darkgray{$S$}}
\end{scriptsize}
\end{pspicture*}
\end{figure}
To this octahedron one associates a natural symmetry group, the {\em octahedron group}, having 48 elements. Half of them are realized by 
{\em holomorphic} transformations of $\cS$, the other half by {\em antiholomorphic} transformations (Theorem \ref{th:octahedral}).
Among the holomorphic transformations are the famous {\em Cayley transforms}, which are holomorphic maps of order $3$, corresponding to
rotating, eg., the triangles $FNW$ and $OBS$ 
by $2\pi / 3$. They are the key ingredient for a more careful analysis of the geometric situation: for completed quantum mechanics, the most important
feature of the unitary group is that it is ``a union of affine spaces''
(Theorem \ref{th:U-complete}).  This result relies on the ``positivity'' assumption on the Jordan algebra. 
-- Summing up, we have the choice
\begin{itemize}
\item
(linear quantum theory) 
to consider the east, west, front and back poles as intrinsic data, fixed under $\UU$; then all six poles are fixed, and we get
the (co) adjoint action of $\UU$ on its Lie algebra, 
\item
or (completed quantum theory) 
 to consider only the north and south poles as intrinsic data; then we get the simply transitive action of 
$\UU$ by left or right translations on itself.
\end{itemize}
Let's stress again that these pictures are mathematically equivalent.
In our toy example, the linear  Schr\"odinger action is not visible: since $\UU = \UU(1)$ is abelian, the adjoint and coadjoint actions are trivial, translating the fact that
global right and left action coincide. 
But as soon as $\UU$ becomes non-abelian, the ``quantum features'' become visible. 
This does of course not mean that the new picture must be the final word: it seems very well possible that in order to describe 
``wave function collapse'' we still need another picture (Part III ?). 

\subsection{Planck's constant,  differential calculus, and quantum theory}
Although Lie theory, even infinite dimensional, may be considered as ``standard'', 
the presence of Planck's constant in the time evolution formula (\ref{eqn:Sch})
  indicates that not everything is ``business as usual''.
This constant also shows up as {\em Jordan-Lie constant} in the crucial property (JL4) of a Jordan-Lie algebra
(section \ref{sec:JL}).   
 As far as I understand the situation, from the point of view of ``conceptual calculus'' (\cite{Be18, Bexy}),
even in classical calculus, the identification of a vector space $V$ with its tangent space $T_0 V$ is not as
``canonical'' as one usually tends to think --
see Subsection \ref{sec:q-calculus} for this issue. 
The non-linear approach to quantum mechanics offers the possibility to distinguish the levels of ``space'' and ``tangent
space'', and to interprete Planck's constant as a factor showing up each time we identify ``space'' with ``tangent space''.
Put differently, the ``locally linear'' nature of quantum geometry forces us to identify sometimes a ``global'' geometric
object with an ``infinitesimal'' one, and if we do so, a constant $\hbar$ shows up.
This touches foundational questions of differential calculus -- 
we hope to be able to say  more on this at another occasion (\cite{Bexy}).

\section{Jordan-Lie algebras}\label{sec:JL}

This first section is purely algebraic, dealing with algebras.  By {\em algebra (over a commutative ring $\K$)} we mean a $\K$-module together 
 with a $\K$-bilinear product $\beta:\bA \times \bA\to \bA$. We shall often write also $\beta(x,y)=xy$.

\subsection{The associator}
 The  {\em associator} of an algebra $(\bA,\beta)$ is defined by
\begin{equation}
A_\beta(x,y,z) := (xy)z - x(yz) = \beta(\beta(x,y),z) - \beta(x,\beta(y,z)) .
\end{equation}
By definition, an algebra $(\bA,\beta)$ is {\em associative} iff $A_\beta = 0$. 
The associator depends quadratically on the product $\beta$, that is, 
$A_{k\beta} = k^2 A_\beta$, for any $k \in \K$.
For every algebra $(\bA,\beta)$, we  define the {\em symmetric} and {\em skew-symmetric part}  by
\begin{align*}
J(x,y) & = xy+yx = \beta(x,y) + \beta(y,x), \\
L(x,y) & =xy-yx=\beta(x,y) - \beta(y,x) .
\end{align*}
These are algebra structures on $\bA$, such
that $\beta(x,y) = \frac{1}{2}(J(x,y)+L(x,y))$.

\begin{lemma}\label{la:JL!!} 
Let $\bA$ be an {\em associative} algebra, i.e.,
$A_\beta = 0$. Then the associatiors of the symmetric and of the skew-symmetric part agree, up to a sign:
 $$
 A_J = - A_L . 
 $$ 
\end{lemma}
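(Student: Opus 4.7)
My plan is a direct computation, exploiting associativity to drop all parentheses in triple products.

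First, I would observe that since $\bA$ is associative, every product $x_{\sigma(1)} x_{\sigma(2)} x_{\sigma(3)}$ with $\sigma \in \fS_3$ is an unambiguous element of $\bA$, so both associators $A_J(x,y,z)$ and $A_L(x,y,z)$ may be written as integer linear combinations of the six monomials $xyz, xzy, yxz, yzx, zxy, zyx$. The proof then reduces to expanding both sides and comparing coefficients.

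Concretely, I would expand
\begin{align*}
A_J(x,y,z) &= (xy+yx)z + z(xy+yx) - x(yz+zy) - (yz+zy)x \\
A_L(x,y,z) &= (xy-yx)z - z(xy-yx) - x(yz-zy) + (yz-zy)x .
\end{align*}
In $A_J$ the terms $xyz$ and $zyx$ appear with opposite signs in the two halves and cancel, leaving
$A_J(x,y,z) = yxz + zxy - xzy - yzx$;
in $A_L$ the terms $xyz$ and $zyx$ cancel similarly, leaving
$A_L(x,y,z) = -yxz - zxy + xzy + yzx$.
Comparison of the two expressions yields $A_J = - A_L$.

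There is really no serious obstacle beyond careful bookkeeping of the six monomials and the four signs appearing in each expansion; the point at which an error is most likely to creep in is keeping track of the signs in the expansion of $A_L$, where the minus signs from $L$ itself interact with the minus sign in the definition of the associator. An alternative, slightly more conceptual presentation would introduce the bilinear expression $B(\alpha,\gamma)(x,y,z) := \alpha(\gamma(x,y),z) - \alpha(x,\gamma(y,z))$, note that $A_\beta = B(\beta,\beta)$ and that $\beta = \tfrac12(J+L)$, so that $0 = 4 A_\beta = A_J + A_L + B(J,L) + B(L,J)$, and then check by the same kind of expansion that $B(J,L) + B(L,J) = 0$. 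Either route gives the lemma on a single page of straightforward algebra; I would present the first, more direct one.
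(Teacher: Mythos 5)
Your proposal is correct and follows essentially the same route as the paper's own proof: a direct expansion of $A_J$ and $A_L$ into the six monomials $xyz,\dots,zyx$ using associativity, cancellation of $xyz$ and $zyx$, and comparison of the two resulting four-term expressions (which indeed come out as $yxz+zxy-xzy-yzx$ and its negative). The bookkeeping of signs in your expansion is accurate, so nothing further is needed.
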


\begin{proof}
By assumption, the product $\alpha(x,y)=xy$ is associative. Using this,
\begin{align*}
L(L(x,y),z) - L(x,L(y,z)) 
 & =  (xy-yx)z - z(xy-yx) - x(yz-zy) + (yz - zy)x \\
& = xzy + yzx - yxz - zxy ,
\\
J(J(x,y),z) - J(x,J(y,z))  & =
  (xy + yx)z + z(xy+yz) - x(yz+zy)- (yz+zy)x
\\
&=  yxz + zxy - xzy - yzx,
\end{align*}
which is the {\em negative} of the preceding expression, whence $A_J=-A_L$. 
\end{proof}

\begin{remark}
See Appendix \ref{app:LJ} for an interpretation, in terms of ternary products,  of the quantity
$A_L = - A_J = R_T$.
\end{remark}

\nin
Recall  that, if $\beta$ is associative,  $L$ is a {\em Lie bracket} (it is skew and satisfies the Jacobi identity)
and $J$ a {\em Jordan algebra product} (it is commutative and satisfies the Jordan identity). 
Then we often write $[x,y]=L(x,y)$ and $x \bullet y = J(x,y)$.

\subsection{Jordan-Lie algebras}
The following definition and results generalize those 
given in the litterature, which  concern the special case corresponding to a
$C^*$-algebra, see eg., \cite{E, L98},  and the
\href{https://ncatlab.org/nlab/show/Jordan-Lie-Banach+algebra}{one on the n-lab}, 
where the Jordan-Lie constant is implicitly supposed to be $\kk=1$.

\begin{definition}
Let $\kk \in \K$ be a constant.
A $\K$-module $V$, equipped with two bilinear products denoted by $[x,y]$ and
 $x \bullet y$ is called a {\em Jordan-Lie algebra with Jordan-Lie constant $\kk$}
if the following holds:
\begin{enumerate}
\item[(JL1)] $(V,[ \cdot,\cdot ])$ is a Lie algebra, i.e., it is skew and satisfies the Jacobi-identity,
\item[(JL2)] $(V,\bullet)$  is commutative,
\item[(JL3)] the Lie algebra acts by derivations of $\bullet$,  that is,
$$
[x,u \bullet v] = [x,u] \bullet v + u \bullet [x,v] ,
$$
\item[(JL4)] the {\em associator identity}: 
associators of both products are proportional, by a factor $\kk$,
that is, $A_\bullet = k A_{[-,-]}$. Written out, this reads
$$
(x \bullet y) \bullet z - x \bullet (y \bullet z) = \kk \bigl(
[[x,y],z]-[x,[y,z]] \bigr),
$$
or, by using the Jacobi-identity (JL1), this can also be written
$$
(x \bullet y) \bullet z - x \bullet (y \bullet z) = - \kk  [[z,x],y] = \kk  [[x,z],y].
$$
\end{enumerate}
A {\em morphism} of Jordan-Lie algebras is a linear map that is both a morphism of
$\bullet$ and of $[-,-]$.
\end{definition}

\begin{lemma}
Under the preceding conditions, the algebra $(V,\bullet)$ is a Jordan algebra.
\end{lemma}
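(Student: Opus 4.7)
The algebra $(V,\bullet)$ is already commutative by (JL2), so what remains is to verify the \emph{Jordan identity}
\[
(x \bullet x) \bullet (y \bullet x) = \bigl( (x \bullet x) \bullet y \bigr) \bullet x ,
\]
which is exactly the assertion that the associator $A_\bullet(x \bullet x,\, y,\, x)$ vanishes. The plan is to rewrite this associator using (JL4) and then show that the resulting Lie-theoretic expression is zero by (JL1) and (JL3).

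First, I would apply (JL4) in the form $A_\bullet(u,v,w) = \kk [[u,w],v]$, specialized to $u = x \bullet x$, $v = y$, $w = x$, to get
\[
\bigl((x\bullet x)\bullet y\bigr)\bullet x - (x\bullet x)\bullet (y\bullet x) = \kk\, [[x\bullet x,\, x],\, y].
\]
So everything reduces to showing $[x\bullet x,\, x] = 0$. This is where (JL3) enters: the inner derivation $[x,-]$ acts as a derivation of $\bullet$, so
\[
[x,\, x\bullet x] = [x,x] \bullet x + x \bullet [x,x] = 0,
\]
since $[x,x]=0$ by the skew-symmetry part of (JL1). Combining with the skew-symmetry of the Lie bracket, $[x\bullet x, x] = -[x, x\bullet x] = 0$, and hence the right-hand side above vanishes, giving the Jordan identity.

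The argument is essentially a one-line calculation once (JL4) has been used to trade the associator for an iterated Lie bracket; no obstacle arises, because the crucial input, $[x, x\bullet x]=0$, is an immediate instance of the derivation property (JL3). It is worth noting that this short proof uses only (JL1), (JL2), (JL3), and (JL4); the Jordan-Lie constant $\kk$ plays no role here (it could even be zero), since the entire Lie-theoretic term vanishes identically.
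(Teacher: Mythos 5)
Your proof is correct and follows essentially the same route as the paper's: both reduce the Jordan identity via the second form of (JL4) to the vanishing of the bracket $[x, x\bullet x]$, which follows from (JL3) together with $[x,x]=0$. The only cosmetic difference is that you place $x^2$ in the first slot of the associator and the paper places it in the third, which are equivalent by commutativity of $\bullet$ (or antisymmetry of the associator of a commutative product).
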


\begin{proof}
We have to prove that the Jordan identity
$(x\bullet y)\bullet x^2 - x\bullet (y \bullet x^2)=0$ holds, where $x^2=x\bullet x$.
From (JL3) with $u=v=x$ we infer $[x,x^2]=2 [x,x] \bullet x = 0$. 
Letting  $z=x^2$ in the second display of (JL4),
the Jordan identity now follows.
\end{proof}

\begin{remark}
When $\kk=0$, the product $\bullet$ is associative, by (JL4),  and commutative, by (JL2), hence in this case
we get the definition of a
\href{https://en.wikipedia.org/wiki/Poisson_algebra}{\em commutative Poisson algebra}.
\end{remark}

\subsection{The case of negative Jordan-Lie constant}
According to Lemma \ref{la:JL!!}, every associative algebra $(\bA,\beta)$ gives rise to a Jordan-Lie algebra
$(\bA,J,L)$ with constant $\kk = -1$. 
More generally, this holds whenever $-\kk$ is a square in $\K$ (so, if $\K=\R$, when $\kk$ is negative):

\begin{theorem}\label{th:JL1}
Let $\bA$ be an associative algebra over $\K$, and let $u,w \in \K^\times$. Then $\bA$ with products
$$
a \bullet b =w(ab+ba), \qquad
[a,b]_u = u(ab-ba),
$$
becomes a Jordan-Lie algebra over $\K$ with Jordan-Lie constant $\kk = - \frac{w^2}{ u^2} $. 
Conversely, assume $(V,\bullet,[--])$ is a Jordan-Lie algebra with Jordan-Lie constant $\kk$ such that
$- \kk$ is a square in the base field $\K$: $\exists c \in \K^\times$,
$-\kk = c^2$. Choose $u,w \in \K^\times$ such that $c = \frac{w}{u}$ and let
$$
ab := \frac{1}{2w} a \bullet b + \frac{1}{2u} [a,b].
$$
Then this defines an associative product on $V$, and both constructions are inverse to each other.
For fixed values of $\kk,u,w$, this defines an equivalence of categories between  associative algebras and Jordan-Lie
algebras with Jordan-Lie  constant $\kk$.
\end{theorem}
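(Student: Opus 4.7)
The argument has three natural parts: (i) the forward construction (associative $\Rightarrow$ Jordan--Lie), (ii) the backward construction (Jordan--Lie $\Rightarrow$ associative), and (iii) verifying that the two are mutually inverse and functorial.

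For (i), axiom (JL2) is immediate from the commutativity of $J$, and (JL1) is the standard fact that the commutator of an associative product satisfies Jacobi; skew-symmetry is obvious. The derivation property (JL3) is a short direct expansion using only associativity of $\bA$ applied to $[x, uv+vu]$ versus $[x,u]v + v[x,u] + u[x,v] + [x,v]u$. The core point is (JL4): since $\bullet = wJ$ and $[-,-]_u = uL$, and since the associator is quadratic in the product ($A_{k\beta}=k^2 A_\beta$), we get $A_\bullet = w^2 A_J$ and $A_{[-,-]_u} = u^2 A_L$. Lemma \ref{la:JL!!} yields $A_J = -A_L$, so $A_\bullet = -\tfrac{w^2}{u^2}\,A_{[-,-]_u} = \kk\, A_{[-,-]_u}$.

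For (ii), the only substantive item is associativity of $ab := \tfrac{1}{2w}(a\bullet b) + \tfrac{1}{2u}[a,b]$. Expanding $(ab)c$ and $a(bc)$ and subtracting, one collects four types of terms:
\begin{equation*}
(ab)c - a(bc) = \tfrac{1}{4w^2} A_\bullet(a,b,c) + \tfrac{1}{4u^2} A_{[-,-]}(a,b,c) + \tfrac{1}{4uw}\bigl( [a,b]\bullet c - a\bullet[b,c] \bigr) + \tfrac{1}{4uw}\bigl( [a\bullet b,c] - [a, b\bullet c]\bigr).
\end{equation*}
The ``diagonal'' pair cancels by (JL4) combined with the choice $\kk = -w^2/u^2$:
$\tfrac{1}{4w^2}\kk + \tfrac{1}{4u^2} = 0$. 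The ``off-diagonal'' pair cancels as follows: apply (JL3) to $[c,a\bullet b]$ and to $[a,b\bullet c]$, then use skew-symmetry of $[-,-]$ and commutativity of $\bullet$; after simplification one finds
\begin{equation*}
[a\bullet b,c] - [a,b\bullet c] = a\bullet [b,c] - [a,b]\bullet c,
\end{equation*}
which is exactly the negative of the other off-diagonal contribution. Hence $(ab)c = a(bc)$.

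For (iii), checking that the two constructions are mutually inverse is immediate by taking symmetric and skew parts: $\tfrac12(ab+ba) = \tfrac{1}{2w}(a\bullet b)$ and $\tfrac12(ab-ba) = \tfrac{1}{2u}[a,b]$, so rescaling by $w$ and $u$ recovers the given Jordan and Lie products; conversely, starting from an associative $\beta$ and recombining gives back $\beta$. Functoriality is transparent because in both directions the operations are $\K$-linear combinations of each other, so a linear map preserves one pair of structures iff it preserves the other.

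The main (and essentially only) obstacle is the associativity check in (ii); everything else is either a routine verification or an application of Lemma \ref{la:JL!!}. Within that check, the subtle step is recognizing that (JL3) and the symmetry properties of $\bullet$ and $[-,-]$ force the mixed terms to cancel on their own, independently of the constant $\kk$, while (JL4) handles precisely the pure-associator terms.
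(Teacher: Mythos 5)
Your proposal is correct and follows essentially the same route as the paper: the forward direction via Lemma \ref{la:JL!!} together with the quadratic dependence $A_{k\beta}=k^2A_\beta$ of the associator on the product, the converse via expanding the associator of $ab=\tfrac{1}{2w}a\bullet b+\tfrac{1}{2u}[a,b]$ into pure-associator terms (killed by (JL4) and the choice of $\kk$) and mixed terms (killed by (JL1,2,3)), and the inverse/functoriality claims by symmetric--skew decomposition and linearity. The only difference is cosmetic: you make explicit the identity $[a\bullet b,c]-[a,b\bullet c]=a\bullet[b,c]-[a,b]\bullet c$ behind the mixed-term cancellation, which the paper leaves as a remark that the four mixed terms cancel by (JL1,2,3).
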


\begin{proof}
If $\bA$ is associative, then the symmetric part is a Jordan, and the skew-symmetric part a Lie product, whence
(JL1) and (JL2).  To prove (JL3), we compute 
$$
[x,yz] = u(xyz - yzx) = u(xyz - yxz + yxz - yzx )
= [x,y]z + y[x,z] ,
$$
which means that the Lie algebra acts by derivation of the associative product, and hence also by derivations
of its symmetric and skew-symmetric parts.
Property (JL4) follows 
with $J=\bullet$ and $L=[-,-]$: since $A_j = - A_L$ by Lemma \ref{la:JL!!},
$$
A_{wJ}=w^2 A_J = - w^2 A_L = - \frac{w^2}{u^2} A_{uL} = \kk  \, A_{uL}.
$$
Note also that the associative product is recovered via
$$
xy = \frac{1}{2w} w(xy+yx) + \frac{1}{2u} u (xy-yx) = 
\frac{1}{2w} x\bullet y  + \frac{1}{2u} [x,y].
$$
To prove the converse, define $ab$ as in the claim, and compute the associator of this product.
There are 8 terms: 
\begin{align*}
(xy)z-x(yz) & = \frac{1}{4w^2} (x\bullet y)\bullet z +
 \frac{1}{4uw}( [x,y]\bullet z + [x\bullet y,z]) +  \frac{1}{4u^2} [[x,y],z] - 
 \\
& \quad \qquad \bigl(
 \frac{1}{4w^2} x\bullet (y \bullet z) + 
  \frac{1}{4uw} ([x,y\bullet z]+x\bullet [y,z]) + 
   \frac{1}{4u^2} [x,[y,z]] \bigr)
   \\
 &=  \frac{1}{4w^2} \bigl( x\bullet y)\bullet z - x\bullet (y \bullet z) \bigr) +
   \frac{1}{4u^2} \bigl( [[x,y],z] - [x,[y,z]] \bigr)
 \\
 & = (  \frac{1}{4w^2} \kk +  \frac{1}{4u^2} ) ([x,y]z - [x,[y,z]]) = 0 ,
\end{align*}
where  the 4 ``mixed terms'' cancel out because of (JL1,2,3) (second equality), and the remaining 4 terms give zero because of
(JL4) (third and fourth equality).
Thus the product $ab$ is associative, and as noticed above, both constructions are inverse to each other.
It is straightforward that morphisms of associative algebras are morphisms of the Lie and Jordan products, and
conversely, if a linear map is both a Lie and Jordan algebra morphism, it will also be a morphism of the product $ab$,
hence we get an equivalence of categories.
\end{proof}

\begin{proposition}
With notation as in the theorem, the following are equivalent:
\begin{enumerate}
\item
 $e$ is a unit for the associative product,
 \item
  $\frac{1}{2w} e$ is a unit for the Jordan product,
 and $[e,a]=0$ for all $a\in V$.
\end{enumerate} \end{proposition}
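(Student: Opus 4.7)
The proof should be a direct unwinding of the two bijective formulas from Theorem \ref{th:JL1}: the associative product is recovered from the Jordan and Lie products via $ab = \frac{1}{2w} a\bullet b + \frac{1}{2u}[a,b]$, and conversely $a\bullet b = w(ab+ba)$, $[a,b] = u(ab-ba)$. The plan is to prove each implication by substituting $e$ into these formulas; no deep argument is needed.

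For $(1)\Rightarrow(2)$, I would assume $ea = ae = a$ for every $a\in V$. Then
\[
 e\bullet a = w(ea + ae) = 2wa, \qquad [e,a] = u(ea - ae) = 0,
\]
which says exactly that $\frac{1}{2w}e$ is a unit for $\bullet$ and that $e$ lies in the Lie center.

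For $(2)\Rightarrow(1)$, I would assume $e\bullet a = 2wa$ and $[e,a]=0$ for all $a\in V$, and compute
\[
 ea = \frac{1}{2w} e\bullet a + \frac{1}{2u}[e,a] = \frac{1}{2w}\cdot 2wa + 0 = a.
\]
Using commutativity of $\bullet$ and antisymmetry of $[-,-]$, one gets $ae = \frac{1}{2w} a\bullet e + \frac{1}{2u}[a,e] = \frac{1}{2w}\cdot 2wa - \frac{1}{2u}\cdot 0 = a$ as well, so $e$ is a two-sided unit.

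There is no real obstacle here; the only subtle point is to remember that in (2) the unit condition is stated for $\frac{1}{2w}e$ rather than $e$, so that the factor $2w$ appears in $e\bullet a = 2wa$, which is precisely what is needed to cancel the $\frac{1}{2w}$ in the reconstruction formula for $ab$. The equivalence is then essentially a bookkeeping of the normalization constants $u$ and $w$.
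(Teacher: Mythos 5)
Your proof is correct and is exactly what the paper intends: its entire proof reads ``Direct from the formulae given in the theorem,'' and your computation is precisely that unwinding of $a \bullet b = w(ab+ba)$, $[a,b]=u(ab-ba)$ and the reconstruction formula $ab = \frac{1}{2w}a\bullet b + \frac{1}{2u}[a,b]$, including the correct bookkeeping of the factor $2w$.
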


\begin{proof}
Direct from the formulae given in the theorem.
\end{proof}

\subsection{The case of positive Jordan-Lie constant}
Now, what about Jordan-Lie algebras with Jordan-Lie constant $\kk = + 1$ (in the real case, positive $\kk$) ?
According to Theorem \ref{th:JL1}, they can be realized whenever there is an element $i \in \K$ such that $i^2 = -1$, by
choosing [$w=1$ and $u=i$], or [$w=i$ and $u=1$].
For convenience, the following  is stated for $\K=\R$, but it extends to any  base ring, cf.\  Remark \ref{rk:anyK}. 

\begin{theorem}\label{th:JL2}
Assume $(\bA,*)$ is a complex $*$-algebra, and let $v,w \in  \R$ be non-zero.
Then $V=\Herm(\bA) = \{ x \in \bA \mid a^* = a \}$ with products
$$
a \bullet b =w(ab+ba), \qquad
[a,b]_{iv} = i v(ab-ba),
$$
becomes a real Jordan-Lie algebra with Jordan-Lie constant $\kk =  \frac{w^2}{ v^2}$. 
Conversely, assume $(V,\bullet,[--])$ is a real Jordan-Lie algebra with Jordan-Lie constant $\kk >0$, and
choose $u,w \in \R^\times$ such that $\kk = \frac{w^2}{u^2}$.
We extend $\bullet$ and $[-,-]$ by $\bC$-bilinearity to complex products on the complexified vector space
$\bA := V_\bC = V \oplus i V$, 
 and let
$$
ab := \frac{1}{2w} a \bullet b + \frac{1}{2iv} [a,b].
$$
Then this defines a complex associative product on $\bA$, and complex conjugation defines an involution on $\bA$
turning it into a $*$-algebra. 
Again,
both constructions are inverse to each other.
In particular, fixing the choice $v=w=1$, we get an equivalence of categories between 
complex  $*$-algebras,  and real Jordan-Lie
algebras with Jordan-Lie constant $\kk = 1$.
\end{theorem}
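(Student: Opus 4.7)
My plan is to reduce Theorem \ref{th:JL2} to Theorem \ref{th:JL1} by complexification, using that the hypothesis ``$-\kk$ is a square'' in Theorem \ref{th:JL1} is automatic over $\bC$, and then treat the $*$-structure essentially as bookkeeping.

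For the forward direction, start with a complex $*$-algebra $(\bA,*)$. Apply Theorem \ref{th:JL1} to the underlying complex associative algebra with parameters $w' = w$ and $u' = iv$ (both in $\bC^\times$). This yields on $\bA$ a \emph{complex} Jordan-Lie structure with constant
\[
-\frac{(w')^2}{(u')^2} = -\frac{w^2}{(iv)^2} = \frac{w^2}{v^2} = \kk.
\]
The only real work is then to check that $\Herm(\bA)$ is a real subspace stable under both products, after which (JL1)--(JL4) restrict automatically. Stability is immediate: if $a^* = a$ and $b^* = b$, then $(ab+ba)^* = ba + ab$, so $a \bullet b \in \Herm(\bA)$; and $(ab-ba)^* = -(ab-ba)$, so multiplication by $i$ restores Hermiticity of $[a,b]_{iv}$. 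Realness of $w$ and $v$ ensures both products land in $\Herm(\bA)$.

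For the converse, suppose $(V,\bullet,[-,-])$ is real Jordan-Lie with $\kk = w^2/v^2 > 0$. Extend both products $\bC$-bilinearly to $\bA = V \oplus iV$; the defining identities (JL1)--(JL4) survive extension because each is $\K$-multilinear and polynomial. Now apply Theorem \ref{th:JL1} in reverse \emph{over $\bC$}: we need $-\kk$ to be a square in $\bC$, which it trivially is, with $w/(iv)$ a root. Theorem \ref{th:JL1} then furnishes an associative complex product
\[
ab = \frac{1}{2w} a \bullet b + \frac{1}{2iv}[a,b],
\]
together with the fact that the two constructions are mutually inverse (as complex constructions), which gives the ``both constructions are inverse to each other'' clause of the theorem for free.

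The one genuinely new point is that complex conjugation $\sigma \colon a+ib \mapsto a - ib$ on $\bA = V \oplus iV$ is a $*$-involution for the associative product. Since $\bullet$ and $[-,-]$ are $\bC$-bilinear extensions of real brackets, $\sigma$ is an algebra automorphism for each of them (viewed as real structures) and is $\bC$-antilinear. Hence
\[
\sigma(ab) = \frac{1}{2w}\sigma(a\bullet b) + \overline{\tfrac{1}{2iv}} \, \sigma([a,b]) = \frac{1}{2w}\,\sigma(a)\bullet \sigma(b) - \frac{1}{2iv}\,[\sigma(a),\sigma(b)],
\]
and the sign flip combined with skew-symmetry of $[-,-]$ and commutativity of $\bullet$ matches precisely the expansion of $\sigma(b)\sigma(a)$, so $\sigma(ab) = \sigma(b)\sigma(a)$. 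By construction $\Herm(\bA) = V$, and the passage back to $V$ recovers the original Jordan and Lie products from the $*$-algebra structure via the forward formulas. The main potential obstacle is purely notational: keeping straight which brackets are defined on $V$ and which on the complexification $\bA$, and tracking the factor $i$ through the $\bC$-antilinearity of $\sigma$; once this is done cleanly, the result is essentially a corollary of Theorem \ref{th:JL1} over $\bC$ together with descent to the real form $\Herm(\bA)$.
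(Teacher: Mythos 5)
Your proposal is correct and follows essentially the same route as the paper's own proof: reduce to Theorem \ref{th:JL1} with real $w$ and imaginary $u=iv$, check that $\Herm(\bA)$ is stable under both products for the forward direction, and for the converse complexify, invoke Theorem \ref{th:JL1} over $\bC$ (where $-\kk$ is automatically a square), and verify that complex conjugation is an anti-automorphism of the resulting associative product. In fact you spell out the stability check and the antilinearity bookkeeping that the paper only asserts; the only clause you leave implicit is the final equivalence-of-categories statement, which the paper likewise dispatches by appeal to ``general algebra'' as in Theorem \ref{th:JL1}.
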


\begin{proof}
If $\bA$ is a complex $*$-algebra, apply the preceding theorem for real $w$ and imaginary $u = iv$. 
This gives a complex Jordan-Lie algebra with $\kk =-  \frac{w^2}{(iv)^2} = \frac{w^2}{v^2} >0$.
Moreover, $\Herm(\bA)$ is stable both under the Jordan and under the Lie product, hence it is a
Jordan-Lie subalgebra of $\bA$, with the same $\kk$.
For the proof of the converse, by general algebra, the complexification of a Jordan-Lie algebra is
again a Jordan-Lie algebra, with complex conjugation being an automorphism of both products.
Applying the preceding theorem, we recover the complex associative product, with constant
$\kk = - \frac{w^2}{(iv)^2} = \frac{w^2}{v^2} >0$.
It remains to prove that complex conjugation is an anti-automorphism of this associative product:
\begin{align*}
(ab)^*  & =  \frac{1}{2w} (a \bullet b)^* - \frac{1}{2iv} [a,b]^* = 
 \frac{1}{2w} a^* \bullet b^* - \frac{1}{2iv} [a^*,b^*]
 \\
 &=  \frac{1}{2w} b^* \bullet a^* + \frac{1}{2iv} [b^*,a^*] = b^* a^* .
\end{align*}
Finally, the arguments establishing the equivalence of categories follow from general algebra, as in
the preceding theorem.
\end{proof}

\begin{remark}\label{rk:anyK}
Working over general base fields or rings $\K$ instead of $\R$, there is an analog of the theorem with $\bC$ replaced by
the ring $R:=\K[X]/(X^2 + \kk) = \K \oplus   {\it j}   \,   \K$ with  $j^2 = \kk$. 
Namely (fixing the choice $w=\frac{1}{2}$), the Jordan-Lie algebra gives rise to an associative product on $V_R = V \oplus j V$, given by
$$
ab = a \bullet b + \frac{j}{2} [a,b] 
$$
where $\bullet$ and $[-,-]$ are the $R$-bilinear extensions of the original products onto $V_R$.  This
product is associative by Theorem \ref{th:JL1}.
As in the proof of Th.\ \ref{th:JL2}, it follows that {\em every} Jordan-Lie algebra with given $\kk$ is obtained as the $1$-eigenspace
of an involution in an involutive associative algebra.
In particular, these arguments work when $\kk=0$: every commutative Poisson algebra is an eigenspace of an 
associative, not necessarily commutative algebra, which in this case is constructed by using the algebra of 
{\em dual numbers} ($j^2 = 0$).
\end{remark}

\subsection{Positive Jordan-Lie algebras}
One should not mix up ``positivity'' of the Jordan-Lie constant $\kk$ 
 with ``positivity conditions'' on the algebra $\Herm(\bA)$: 
these two things are independent of each other. More precisely, positivity of $\kk$ is a necessary, but by no means sufficient condition
for    $\Herm(\bA)$ to be
``positive''  in the sense of ordered Jordan algebras (see Example \ref{exa:ordered}). 
However, most authors implicitly add a ``positivity'' condition in their definitions, since they aim at $C^*$-algebras.
Here is my version of such a positivity condition: 

\begin{definition}
A {\em $P^*$-Jordan-Lie algebra} is a  Jordan-Lie algebra such that:
\begin{enumerate}
\item[(1)]
$\K$ is an ordered field, and the Jordan-Lie constant is positive:
$k>0$,
\item[(2)]
the Jordan algebra $(V,\bullet)$ is an {\em ordered Jordan algebra} (cf.\  Appendix A of Part I), that is, 
$V$ is an ordered $\K$-module, and its positive cone $\Omega = \{ x \in V\mid x>0 \}$ satisfies
$\Omega \subset V^\times$, and: 
$\forall a \in \Omega, \forall b \in V^\times$,
$aba \in \Omega$,
\item[(3)] 
$1 + aba \in \Omega$ for all $a \in V$ and $b\in \Omega$.
\end{enumerate}
\end{definition}

\nin
For instance, if the involutive associative algebra corresponding to a Jordan-Lie algebra with $\kk >0$ is a
$C^*$-algebra, then the above conditions are fulfilled (but the converse is not true).
In finite dimension over $\K=\R$, the $P^*$-condition implies that $\Omega$ is {\em of non-compact type}, and
hence the Lie algebra {\em is of compact type}, thus we end up with the case of the compact group $\UU(n)$ from the following

\begin{example}\label{exa:ordered}
 The $*$-algebra $\bA = M(n,n;\bC)$ with involution
$a^* = I_{p,q} \overline a^t I_{p,q}$, where $I_{p,q}$ is the diagonal matrix having $p$ coefficients $1$ and
$q$ coefficients $-1$, is ordered (i.e., a $P^*$-algebra) if, and only if, $p=0$ or $q=0$ (iff it is a $C^*$-algebra).
These cases are distinguished: the Jordan algebra $\Herm(p,q;\bC)$ of Hermitian matrices of signature
$(p,q)$ is Euclidean iff $p=0$ or $q=0$, iff the pseudo-unitary group $\UU(p,q)$ is compact. 
\end{example}

\begin{remark}[Physics constants: sign of $\hbar$]\label{rk:hbar}
In physics contexts, $\kk$ is positive, and we let $\hbar = 2 \sqrt{\kk}$ (positive square root). 
We fix $w=\frac{1}{2}$, so
$u=\frac{i}{\hbar}$, and
$\kk = \frac{\hbar^2}{4}$, and
\begin{equation}
ab = a\bullet b + \frac{\hbar}{2i} [a,b].
\end{equation}
We assume that $\hbar >0$, but  note that the opposite choice $-\hbar$ is related with the opposite product $ba$, leading to
the {\em same}  constant $\kk = \frac{1}{4 \hbar^2}$.  Thus the sign of $\hbar$ seems to be some kind of convention, corresponding to (implicit) conventions
of preferring left to right actions, or to write function symbols at the left of their arguments.  
\end{remark}

\subsection{Summary}
The setting of positive ($P^*)$-Jordan-Lie algebras is mathematically equivalent to the setting of positive $*$-algebras, that is, to the setting
of quantum mechanics (Part I). It is  a ``purely real'' setting, but
complex numbers come out of the assumption that the Jordan-Lie constant be {\em positive}.

\section{Geometry of $\bA$-unitary groups}\label{sec:algeometry}

\subsection{The general algeometry problem}
Let us call ``general algeometry problem'' the following: 
by the theory of Sophus Lie, we know that {\em Lie groups} correspond to (finite dimensional, real) {\em Lie algebras}. We ask: 
{\em
What can one say for other classes of algebras: given a class of algebras, defined by certain algebraic identities, is there a class of 
 global, geometric objects ``integrating''
such algebras?} \footnote{In various papers, and on my homepage, I've called this the ``general coquecigrue problem''. But possibly, by now time has come to
give it a more serious name.} 

\ssk
I've been working for quite a long time
on this kind of questions, mostly for certain classes of algebras, namely for {\em Jordan algebras} and {\em associative algebras},
as well as for their ternary algebraic analogs. As mentioned in Part I, Jordan algebras correspond to so-called {\em Jordan geometries}, or {\em 
generalized projective geometries}. Since Lie algebras correspond to Lie groups, this suggests that
for Jordan-Lie algebras,  the geometric object ought to be a space carrying two kinds of structure:
\begin{enumerate}
\item
a {\em Lie group structure} (or {\em Lie torsor structure}, if we don't want to fix the unit element),
\item
some kind of projective structure: a {\em generalized projective line} (see Part I).
\end{enumerate}
\nin
It is quite clear that (JL3) translates by saying that the Lie group (1) shall act by automorphisms of the projective structure (2). 
The difficult part is:  {\em  how to translate, geometrically, the compatibility condition (JL4)?}
Anywhow, this geometric object shall correspond to what we will call below (Definition \ref{def:U-setting}) the {\em unitary setting of
completed quantum mechanics}. 
Let's say already here that our answer  is still far from definitive (cf.\ Appendix \ref{sec:JL-geometry}). In particular,   it does not (yet) give a direct clue how to interprete the
``measurement problem'' in the geometric setting.

\subsection{Imbedding of unitary groups into Lagrangians} 
General unitary, orthogonal and symplectic groups can be imbedded into varieties of Lagrangian subspaces. 
The basic idea is simple and fairly well-known: 
 a linear map
$g:V \to W$ preserves a bi- or sequilinear form $\beta$, that is,
$\beta (gv, gw) = \beta(v,w)$, iff  its {\em graph} is a Lagrangian subspace for the form $B$ on $W \oplus V$, 
sometimes denoted by $\beta \ominus \beta$,  given by
\begin{equation}
B(( v,v'),(w,w')) := \beta(v,w) - \beta(v',w') .
\end{equation}
Indeed, the graph of $g:V \to W$ is the set $\Graph_g =\{ (gx,x) \mid x \in V \} \subset W \oplus V$ (which is a linear subspace if $g$ is linear), and
$\Graph_g$ is Lagrangian for $\beta \ominus \beta$: 
\begin{equation}
B((gx,x),(gy,y)) =
\beta (gx, gy) - \beta(x,y) = 0 .
\end{equation}
 Thus via $g \mapsto \Graph_g$,
 unitary groups can be imbedded as subsets into Grassmannian or Lagrangian varieties.  By a compacity argument, if the group is compact, the imbedding
 is onto. 
 We shall specialize this general construction to the case we are interested in. First, we define the relevant unitary groups.

\subsection{The $\bA$-unitary groups}
Recall from Part I the setting of completed quantum theory, given by a
 $*$-algebra $\bA$ (just an associative $\K$-algebra with involution $*$), and the associated Hermitian projective
line $\cR$, a real form of the projective line $\cS = \bA \PP^1$.
The automorphism group of $\cS$ corresponds to the projective group $\PP \Gl(2,\bA)$.
In the language of completed quantum theory, the algeometry problem for Jordan-Lie algebras raises the following question:
what intrinsic geometric datum allows to identify $\cR$ with a unitary group (or, more correctly, unitary torsor)?  The group structure is an additional
structure on $\cR$, which was not present in the setting of Part I.
Let's start by defining the unitary groups:

\begin{definition} \label{def:U(A)}
Let $\bA$ be an associative $*$-algebra $\bA$, and $M$ a real invertible $n\times n$-matrix.
The {\em group of $M$-unitary $(n\times n)$-matrices with coefficients in
$\bA$} is 
\begin{equation}\label{eqn:UnM}
 \UU(M,\bA,*):=\UU(M, \bA) := \{ A \in M(n,n;\bA)  \mid A^* M A = M  = A  M  A^*\} 
\end{equation}
where $(A^*)_{ij} = (a_{ji})^*$ is the conjugate-transposed matrix (with respect to the involution $*$ of $\bA$). 
Since $(AB)^* = B^* A^*$ and $1^* = 1$, this is indeed a group. When $M = 1_n$ is the unit matrix, we also write
\begin{equation}\label{eqn:Un}
 \UU(n,\bA,*):=\UU(n, \bA) := \{ A \in M(n,n;\bA)  \mid A^* A = 1_n = A  A^*\} 
\end{equation}
We are most notably interested in the cases $n=2$ and $n=1$: in the latter, we get the {\em unitary group of $\bA$} 
\begin{equation}\label{eqn:U}
\UU := \UU(\bA):=\UU(1, \bA) := \{ x \in \bA \mid x^* x = 1 = x x^*  \} .
\end{equation}
\end{definition}

Note that $M(n,n;\bA)$ is again a $*$-algebra, and we have $\UU(n,\bA) = \UU(1,M(n,n;\bA))$.
When $\bA = \bC$, then the groups $\UU(n)$ are {\em compact}. This, of course, does not carry over to the general case,
 not even to the case of $C^*$-algebras.
The main property of the ``compact-like'' groups, replacing compactness in arbitrary dimension, will be stated in Theorem \ref{th:U-complete}.

 \subsection{The unitary setting of completed quantum mechanics}
In the following, we will introduce a slight, but important shift in the setting: 
rather than by $(\cS,\tau)$, the setting should be determined by $(\cS; \tau;(N,S))$, where $(N,S)$ is a pair of points called
{\em north and south pole}.  
Before fixing a particular pair, let's explain that fixing  any pair or triple of 
 points on $\cS = \bA\PP^1$  defines certain geometric structures:
 \begin{enumerate}
 \item
 a {\em transversal pair} $(a,b)$ (recall: $a,b$ are {\em transversal} if their sum is direct: 
  $\bA^2 = a \oplus b$) defines certain
  holomorphic maps $\cS \to \cS$,
 \item
 a  {\em transversal triple} $(a,b,c)$ (meaning $a,b,c$ are pairwise transversal) defines certain other
 holomorphic and antiholomorphic maps, such that
 \item
 when the triple $(a,b,c)$ is
 completed to a  $6$-tuple $(a,b;c,d;n,s)$, there is a  group of holomorphic and
 antiholomorphic transformations generated by these maps; the group  turns out to be an {\em octahedron group}.
 \end{enumerate}
 \nin
 We explain items (1), (2), (3) in this order:
 
 \subsubsection{Holomorphic automorphisms defined by a pair of points}
 Given a  {\em transversal pair} $(a,b)$, and $\lambda \in \bC^\times$, 
we consider the linear map that is given by the ``matrix'' 
$\bigl( \begin{smallmatrix} \lambda & 0 \\0& 1 \end{smallmatrix} \bigr)$ with respect to the decomposition
$\bA^2 = a \oplus b$. On $\cS = ±bA \PP^1$ this induces a holomorphic diffeomorphism 
\begin{equation}
\lambda_{a,b}:\cS\to \cS,  \quad [a r + bs] \mapsto [ \lambda ar +  bs] 
\end{equation}

\subsubsection{Holomorphic automorphisms and antiholomorphic antiautomorphisms defined by a triple of points}
 Fixing  
  a {\em transversal triple} $(a,b,c)$ means to fix a common complement $c$ of $a$ and of $b$, for $(a,b)$ a transversal pair.
 Then $c$ can be considered as
 {\em diagonal} in the decomposition $\bA^2 = a \oplus b$, and hence serves to identify $a$ with $b$. Lets denote this situation by the
 notation
 $
 \bA^2 = a \oplus_c b \cong a \oplus a $.
Then  any other common complement of $a$ and $b$ can be identified with the graph of an $\bA$-linear isomorphism from $a$ to $a$;
that is, 
the set $U_{ab} = U_a \cap U_b$ of common complements
carries a group structure with neutral element $c$ and isomorphic to the group $\bA^\times$ (see \cite{BeKi} for more on this).

\begin{definition}
Given a transversal triple $(a,b,c)$, we define a {\em holomorphic symmetry} $J^{ab}_c:\cS \to \cS$, and an
{\em antiholomorphic reflection} $\tau^{ab}_c:\cS \to \cS$, as follows. 
\begin{enumerate}
\item
The map $J^{ab}_c$ is induced by the linear map
${\rm J}:a \oplus a \to a \oplus a$, $(u,v) \mapsto (v,u)$ (reflection at the diagonal $c$). 
Put differently, the map
$J^{ab}_c : \cS \to \cS$
is the  {\em central symmetry at the midpoint of $(a,b)$ in the affine space $U_c$}, or, yet differently, it is the inversion map
in the group $(U_{ab},c)$ (see \cite{Be14} for more on these ``inversions'').
\item
For $z \in \cS$, we define $\tau^{ab}_c(z) := z^{\perp,\beta}$ to be the orthocomplement of $z$ with respect to
the ``hyperbolic'' Hermitian form given on $\bA^2 = a \oplus b = a \oplus a$ by
\begin{equation}
\beta  ((u,v),(u',v')) = u^* u' - v^* v' .
\end{equation}
\end{enumerate}
\end{definition}

\nin
The form $\beta$ depends on the bases of $a$ and $b$, but the orthocomplement $z^{\perp,\beta}$ does not.
Both maps are bijections of order 2, fixing $c$ and exchanging $a,b$: 
\begin{align}
J^{ab}_c(a)=b, \qquad J^{ab}_c(b)=a, \qquad J^{ab}_c(c)=c ,\qquad J^{ab}_c \circ J^{ab}_c = \id_\cS ,
\\
\tau^{ab}_c(a)=b, \qquad \tau^{ab}_c(b)=a, \qquad \tau^{ab}_c(c)=c ,\qquad \tau^{ab}_c \circ \tau^{ab}_c = \id_\cS .
\end{align} 
The maps $J^{ab}_c$ are holomorphic, and the fixed point $c$ is an isolated fixed point -- 
in \cite{Be14}, we have characterized {\em Jordan geometries} by geometric properties of such sets of symmetries. 
The maps $\tau^{ab}_c$ are antiholomorphic, and
 the fixed point set of $\tau^{ab}_c$  is a {\em real form} of the complex space $\cS$.

\begin{theorem}\label{th:UU1}
Given a transversal triple $(a,b,c)$ in $\cS$,  let $\tau := \tau^{ab}_c$. Then
 the fixed point set 
$\cS^\tau$ is the set of $\beta$-Lagrangian subspaces of $\cS$, and its  subset
$$
U_{ab}^c := U_{ab} \cap \cS^\tau = \{ x \in \cS \mid \tau(x)=x, \, x \top a, x \top b \}
$$
is a subgroup of $U_{ab}$ with unit element $c$.  This group
can be identified with the $\bA$-unitary group $\UU(\bA,*)$ given by (\ref{eqn:U}), 
via the imbedding (defined with respect to the decomposition $\bA^2 = a \oplus b$)
$$
\UU(\bA,*)  \to \cS^\tau, \quad x  \mapsto [(1,x)] .
$$
\end{theorem}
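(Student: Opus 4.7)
The plan is to prove the three assertions of the theorem in succession. Assertion (1), that $\cS^\tau$ coincides with the $\beta$-Lagrangian locus, is tautological from the definition $\tau(z) := z^{\perp,\beta}$: a point $z$ satisfies $\tau(z)=z$ precisely when $z = z^{\perp,\beta}$, and this is the defining property of a $\beta$-Lagrangian subspace.

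For the remaining assertions I would first set up coordinates adapted to the triple $(a,b,c)$: using $c$ as diagonal to identify $a$ and $b$ both with $\bA$, we may write $\bA^2 = a \oplus_c b \cong \bA \oplus \bA$ with $a = \bA \oplus 0$, $b = 0 \oplus \bA$, and $c$ the diagonal submodule. In these coordinates the map $x \mapsto [(1,x)]$ identifies $U_{ab}$ with $\bA^\times$, sending $c$ to $1$; by the result of \cite{BeKi} recalled just above the theorem, the group law on $(U_{ab},c)$ is transported under this identification to multiplication in $\bA^\times$. It then suffices to show that the point $[(1,x)]$ belongs to $\cS^\tau$ if and only if $x \in \UU(\bA,*)$.

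For this last step I would compute, for $x \in \bA^\times$, the $\beta$-orthocomplement of the submodule $L_x := \{(r,xr) \mid r \in \bA\}$ representing $[(1,x)]$. Using the sesquilinearity of $\beta((u,v),(u',v')) = u^*u' - v^*v'$, the equation $\beta((u,v),(r,xr)) = 0$ for all $r \in \bA$ reduces to $(u - x^* v)^* r = 0$ for all $r$, i.e.\ to $u = x^* v$, giving $L_x^{\perp,\beta} = \{(x^*v, v) \mid v \in \bA\}$. Testing the inclusion $L_x \subseteq L_x^{\perp,\beta}$ on the generator $r=1$ yields $x^*x = 1$, while testing the reverse inclusion $L_x^{\perp,\beta} \subseteq L_x$ on $v=1$ yields $xx^* = 1$. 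Combined with the coordinate identification, this proves assertion (3), and assertion (2) follows at once since $\UU(\bA,*)$ is visibly a subgroup of $\bA^\times$.

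The main delicate point will be the bookkeeping of module conventions (left versus right $\bA$-action, and the consequent placement of $*$ in the sesquilinearity of $\beta$), and ensuring that the submodule $\{(x^*v,v) \mid v \in \bA\}$ genuinely exhausts $L_x^{\perp,\beta}$ as an $\bA$-submodule, rather than only generating it; this relies on the non-degeneracy of $\beta$ on $\bA^2$ and the freeness of $L_x$ of rank one. Once these conventions are pinned down, the whole argument reduces to the two lines of linear algebra sketched above.
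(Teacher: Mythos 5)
Your proof is correct, and it follows the same basic route as the paper's own: work in the coordinates $\bA^2 = a \oplus_c b$, identify $U_{ab}$ with $\bA^\times$ via $x \mapsto [(1,x)]$ (quoting the torsor/group structure from \cite{BeKi}), and then determine which of these points are fixed by $\tau$. The difference lies in the final computation, and it is to your credit. The paper merely checks isotropy of the generator, $0 = \beta((1,x),(1,x)) = 1 - x^*x$, and then asserts ``iff $x \in \UU(\bA,*)$''. Strictly speaking, isotropy of the generator only gives the inclusion $z \subseteq z^{\perp,\beta}$ and hence only the one-sided condition $x^*x = 1$; in a general $*$-algebra this is weaker than membership in $\UU(\bA,*)$, whose definition (\ref{eqn:U}) demands both $x^*x=1$ and $xx^*=1$ (think of a unilateral shift $S$ on a Hilbert space, where $S^*S = 1$ but $SS^* \neq 1$). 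Your computation of the full orthocomplement $L_x^{\perp,\beta} = \{(x^*v,v) \mid v \in \bA\}$, with the two inclusions $L_x \subseteq L_x^{\perp,\beta}$ yielding $x^*x=1$ and $L_x^{\perp,\beta} \subseteq L_x$ yielding $xx^*=1$, is exactly what is needed to make the equivalence ``Lagrangian $\Leftrightarrow$ unitary'' correct as stated; the two conditions do not collapse into one another, and the Lagrangian condition $z = z^{\perp,\beta}$ (rather than mere isotropy) is what produces both. So your argument not only reproduces the paper's proof but closes a small gap in it.
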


\begin{proof}
In \cite{BeKi2} this result is proved by showing that $\tau$ is an {\em antiautomorphism} of the structure map $\Gamma$. 
Without using that general theory, 
let us check the statements here by direct computation: first of all, $z$ is a fixed point of $\tau^{ab}_c$ iff
$z^\perp = z$, iff $z$ is Lagrangian. Next, assume $z = z^\perp$ and $z \top a$, $z \top b$, so
$z=[(1,x)]$ with $x \in \bA$ (for the fixed decomposition $\bA^2 = a \oplus b$).
Then $z$ is Lagrangian for $\beta$ iff
$$
0=\beta ((1,x) ,(1,x) ) = 1 - x^* x ,
$$
i.e., iff $x^* x =1$, 
 iff $x \in \UU(\bA,*)$.
Hence  $U_{ab} \cap \cS^\tau$ is the imbedded group $\UU(\bA,*)$.
\end{proof}

Under the action of $\GL(2,\bA)$,
 every transversal triple $(a,b,c)$ is conjugate to the standard transversal triple
$(O,W,F)$,
\begin{equation}\label{eqn:OWF}
\begin{matrix}
O & = & [(1,0)] & = & \bA \oplus 0 , \\  
W  & = & [(0,1)]&= & 0\oplus \bA,  \\
F  & = & [(1,1)] &= & \{ (a,a)\mid a \in \bA \}
\end{matrix}
\end{equation}
(first and second factor of $\bA^2$, and 
the diagonal). 
We then define three other points by 
$N = i F$, $B = i^2 F$, $S = i^3 F$, where $i = i_{O,W}$ is the dilation operator by $i$ with respect to the decomposition
$\bA^2 = O + W$. 
Thus our transversal triple gives rise to
six ``poles'' of $\cS$, called {\em east, west, north, south, front, and back},  given by 
 \begin{align}
O:= [(1,0)], & \qquad W := [(0,1)], \label{H}
\\
N:= [(1,i)], & \qquad S := [(1,-i)], \label{V}
\\
F:=[(1,1)],& \qquad B :=[(1,-1)]. \label{D}
\end{align}
In the usual chart $\bA$, this corresponds to
$(0,\infty;i,-i;1,-1)$. 
The $6$-tuple of poles $(O,W;N,S;F,B)$ comes with a partition into three parts:
we call
(\ref{H}) {\em horizontal pair},
(\ref{V}) {\em vertical pair},
(\ref{D}) {\em depth pair}, and we say that $N$ is the {\em opposite of $S$}, and so on.
We represent the six poles by the six vertices of a regular octohedron, such that poles from the same part are represented by 
opposite vertices (Figure \ref{fig:octahedron}). 
Compare this with Figure \ref{fig:RS}, where the $8$ vertices of the octahedron are placed on the Riemann sphere.

\begin{definition}\label{def:U-setting}
We consider henceforth $(\cS,\tau;N,S)$ as fixed data describing the setting of completed quantum mechanics, and
call this the {\em unitary setting of completed quantum mechanics}. As in Part I,
the {\em principal real form of $\cS$} is
$\tau := \tau^{NS}_0$, respectively its fixed point set
$\cR := \cS^{\tau}$.
The group $\UU := \UU^{SN}_O$ defined in Theorem \ref{th:UU1} with unit $O$
 is called
the {\em unitary group of completed quantum mechanics} (however, the point $O$ is arbitrary and not part of the
setting). As a set,
$$
\UU = \cR \cap U_{NS} = \cS^\tau \cap U_{NS} = 
 \{ z \in \cS\mid \tau(z) = z,  z \top N, z\top S\} 
$$
(since $\tau(S)=N$ and $\tau$ preserves transversality, the last condition is redundant).
Forgetting the unit $O$ of this group, we get a torsor, called {\em the unitary torsor of completed quantum mechanics}
(simply transitive action of $\UU(\bA,*)$ on $\UU$). 
\end{definition}

\begin{theorem}
When $\bA = M(n,n;\bC)$, with its usual involution $x^* = \overline x^t$,
 then $\UU = \cR$, that is,
 $U^O_{SN} = \cS^\tau \cong \UU(n)$; in other words,  the imbedding $\UU\to \cS^\tau$  from Theorem \ref{th:UU1} is a bijection.
\end{theorem}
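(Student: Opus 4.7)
The plan is to prove that for $\bA = M(n,n;\bC)$ every $\beta$-Lagrangian in $\bA^2$ is automatically transversal to $N$ and $S$, so that the intersection $\UU = \cS^\tau \cap U_{NS}$ exhausts all of $\cS^\tau = \cR$; combined with the injectivity built into Theorem \ref{th:UU1}, this will yield the bijection.

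First I would unravel the geometric picture concretely. Via Morita equivalence (idempotent $e_{11} \in M(n,n;\bC)$), the projective line $\bA\PP^1$ is the complex Grassmannian $\Gras_n(\bC^{2n})$, and a line $L = (a,b)\bA$ corresponds to the $n$-plane $\tilde L = \{(av, bv) : v \in \bC^n\} \subset \bC^n \oplus \bC^n$, with $a, b$ acting as $n \times n$ complex matrices. In particular $\tilde N = \{(v, iv) : v\in\bC^n\}$ and $\tilde S = \{(v, -iv) : v\in\bC^n\}$.

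Next I would transport the hyperbolic form on $\bA^2 = N \oplus_O S$ to a scalar form on $\bC^{2n}$. The change of basis from the $(O, W)$-chart to the $(N, S)$-chart is $\alpha = (a - ib)/2$, $\gamma = (a + ib)/2$, and a short computation shows that the $\bA$-valued form $\alpha^*\alpha' - \gamma^*\gamma'$ descends through Morita to the $\bC$-Hermitian form $B((a,b),(a',b')) = \tfrac{i}{2}(b^* a' - a^* b')$ on $\bC^{2n}$. The crucial features of $B$ are (i) $\tilde N$ is a maximal positive-definite subspace (indeed $B((v,iv),(v,iv)) = \|v\|^2$) and $\tilde S$ a maximal negative-definite one, so $B$ has signature $(n,n)$, and (ii) the $\bA$-Lagrangian condition $a^* b = b^* a$ on a generator of $L$ is exactly $\tilde L$ being $B$-Lagrangian.

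Once the form is in hand, the content of the theorem is a one-line signature argument: any nonzero vector in $\tilde L \cap \tilde N$ would be both isotropic (as it lies in $\tilde L$) and of strictly positive $B$-norm (as it lies in $\tilde N$), which is absurd. Hence $\tilde L \cap \tilde N = 0$, and by the dimension count $\dim_\bC \tilde L = \dim_\bC \tilde N = n$ inside $\bC^{2n}$ we conclude $\tilde L + \tilde N = \bC^{2n}$, i.e., $L \top N$. Negative-definiteness of $\tilde S$ gives $L \top S$ in the same way. Therefore $\cS^\tau \subseteq U_{NS}$, so $\UU = \cS^\tau \cap U_{NS} = \cS^\tau = \cR$ and the embedding of Theorem \ref{th:UU1} is surjective, hence bijective.

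I do not expect a serious obstacle. The only step that requires care is the identification of the form, because two direct-sum decompositions of $\bA^2$ are in play (the natural $(O, W)$ chart and the form-adapted $(N, S)$ chart) and one must pass between them and through Morita without losing track of signs. Once that is done, the surjectivity of the embedding is essentially the classical fact that the Lagrangian Grassmannian of a $\bC$-Hermitian form of signature $(n,n)$ is diffeomorphic — here, equal as a subset of $\Gras_n(\bC^{2n})$ — to the compact unitary group $\UU(n)$, and the very existence of $n$-dimensional subspaces on which $B$ is definite is the concrete reason nothing is missed.
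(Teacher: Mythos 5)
Your proof is correct, but it takes a genuinely different route from the paper's. The paper's argument is topological: it invokes the general finite-dimensional fact (citing \cite{Be00}) that the image of the imbedding of Theorem \ref{th:UU1} is open and Zariski-dense, notes that for the positive involution the group $\UU(n)$ is compact so the image is also closed, and concludes by connectedness of the Lagrangian variety that an open, closed, nonempty subset must be everything. You replace all of this by a linear-algebra signature argument: after the Morita identification $\cS \cong \Gras_n(\bC^{2n})$, the hyperbolic form defining $\tau = \tau^{NS}_O$ becomes the Hermitian form $B\bigl((a,b),(a',b')\bigr) = \tfrac{i}{2}(b^*a' - a^*b')$ of signature $(n,n)$, for which $\tilde N$ and $\tilde S$ are maximal definite subspaces; since a Lagrangian is isotropic, it meets each of them trivially, and the count $n+n=2n$ upgrades trivial intersection to transversality, giving $\cS^\tau \subseteq U_{NS}$ and hence $\UU = \cS^\tau \cap U_{NS} = \cS^\tau$. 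Your route buys self-containedness (no appeal to the open/dense theorem or to connectedness), it proves the sharper geometric fact that \emph{every} $\tau$-fixed point is transversal to $N$ and $S$, and it isolates exactly where positivity of the involution enters (definiteness of $B$ on $\tilde N$ and $\tilde S$ — the same mechanism as in Lemma \ref{la:Cayley} inside the proof of Theorem \ref{th:U-complete}, here applied to the full Lagrangian variety rather than to a single affine cell). The paper's route buys brevity modulo the cited background and a pattern (open $+$ closed $+$ connected) that needs no explicit matrix model of the form. Note that both arguments are intrinsically finite-dimensional — yours through the dimension count, the paper's through compactness — consistent with the remark following the theorem that $\UU = \cR$ fails for general infinite-dimensional algebras.
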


\begin{proof}
In the finite-dimensional case, the image of the imbedding has always open (Zariski)-dense image, cf.\ \cite{Be00}. Moreover, for the usual (positive)
 involution,
the unitary group is {\em compact}, hence the imbedding has also closed image. Since the Lagrangian variety is connected, 
the open and closed image must be all of it. 
\end{proof}

This compacteness argument does of course not 
carry over to the case of infinite dimensional algebras (which we need in quantum theory).
In general, the image will be open, or dense, if and only if the group of units of $\bA$ is open, resp.\ dense in $\bA$.
For instance, when $\bA$ is a Banach algebra, it is open, but need not be dense.
Thus $\UU = \cR \cap U_{NS}$ will be always open in $\cR$, but
equality will be a finite-dimensional feature. 
The four elements $O,W,F,B$ belong to $\UU$, and so do the ``linear parts'' determined by them (Theorem \ref{th:U-complete}).
One may ask if the elements of $\cR \setminus \UU$ are ``unphysical'', or if they have a ``physical meaning''. For the time being, I have no answer
to this question. Let's say that they could be considered as ``members of the multiversum who are not really admitted in our universum''...


\subsubsection{Action of the octahedron group}\label{ssec:six-poles}
Fix $(O,W,F)$ as in (\ref{eqn:OWF}) and complete them to a $6$-tuple of poles
$(O,W;N,S;F,B)$, as described above.
This $6$-tuple determines a symmetry group acting transitively on vertices, and compatible with the partition in
$3$ pairs:
 a group of 48 elements, known as the {\em octahedron group} (cf.\ Appendix \ref{app:octahedral}).
This group contains, among others, several {\em  real forms} of $\cS$, and the famous {\em Cayley transform}
permuting such real forms:

\begin{theorem}\label{th:octahedral}
We fix $6$ poles on $\cS$, with notation as above.
\begin{enumerate}
\item
The three maps $i_{a,b}$ (multiplication by the scalar $i$ with respect to the transversal pair $(a,b)$),
for $(a,b)= (N,S), (O,W), (F,B)$, generate a group $\bV_0$ of holomorphic automorphisms  of $\cS$, isomorphic to
the group 
 $\fS_4$ of direct octahedron symmetries.
  \item
The orthocomplementation map $\zeta:\cS\to \cS$, $x \mapsto x^\perp$    with respect to the 
``Euclidean (positive) Hermitian form'' on $\bA^2$ given by 
$$
\langle (u_1,u_2),(v_1,v_2) \rangle := u_1^* v_1 + u_2^* v_2
$$
is an antiholomorphic  antiautomorphism of $\cS$ exchanging opposite poles.
Composition of $\zeta$ with the maps from item (1) defines 24 antiholomorphic antiautomorphisms of $\cS$ which
permute the $6$ poles.
\item
Together, the  $48$ holomorphic and anti-holomorphic maps from (1) and (2) form a group isomorphic to 
 the full octahedron group
$\bV = \bV_0 \cup \zeta \, \bV_0 $. Its central element is $\zeta$. 
\end{enumerate}
A full description of the
group $\bV$ is given in the tables in Section \ref{app:table}.
\end{theorem}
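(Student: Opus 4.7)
The plan is to treat all three items by direct computation in projective coordinates on $\cS = \bA\PP^1$, exploiting the explicit description of the six poles in equations (\ref{H})--(\ref{D}). For item (1), each $i_{a,b}$ is by construction the projective transformation induced by the diagonal matrix $\bigl(\begin{smallmatrix} i & 0 \\ 0 & 1 \end{smallmatrix}\bigr)$ in the frame $\bA^2 = a \oplus b$, so it lies in $\PP\GL(2,\bA)$ and is holomorphic. I would first verify by direct substitution that each of $i_{N,S}$, $i_{O,W}$, $i_{F,B}$ fixes its own axis and cyclically permutes the remaining four poles with order four; for $i_{O,W}$ this cycle is $(F,N,B,S)$, built into the very definition $N = iF$, $B = i^2 F$, $S = i^3 F$, and the other two four-cycles are obtained by a short computation after changing basis to $\bA^2 = N\oplus S$ and $\bA^2 = F\oplus B$. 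The induced action on the set of three opposite pairs $\{\{N,S\},\{O,W\},\{F,B\}\}$ gives a surjection $\bV_0 \to \fS_3$ whose kernel is the Klein four-group generated by the half-turns $i_{a,b}^2$, so $|\bV_0| = 24$ and $\bV_0 \cong \fS_4$ by the standard presentation of the octahedron rotation group.

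For item (2), in the affine chart dual to $W$ one computes $\zeta([(r,s)]) = [(-(r^*)^{-1}s^*, 1)]$ whenever $r$ is invertible, which is manifestly antiholomorphic; the antiautomorphism property with respect to the Jordan-geometric structure follows from the general fact that orthocomplementation with respect to a non-degenerate Hermitian form exchanges sums and intersections of $\bA$-submodules (cf.\ the structure-map argument invoked in the proof of Theorem~\ref{th:UU1}). A three-line calculation using $i^* = -i$ and $1^* = 1$ shows $\zeta$ swaps each of the three opposite pairs: $\zeta(O)=W$, $\zeta(N)=S$, $\zeta(F)=B$. Composition with the $24$ elements of $\bV_0$ then yields $24$ antiholomorphic antiautomorphisms, automatically distinct from one another and disjoint from $\bV_0$ since a single map cannot be simultaneously holomorphic and antiholomorphic on the open dense locus of invertibility unless $*$ is trivial.

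For item (3), the essential point is that $\zeta$ is central in the group generated. We have $\zeta^2 = \id$ from $(x^\perp)^\perp = x$, and centrality reduces to showing $\zeta$ commutes with each of the three generators $i_{a,b}$. I would check this by direct substitution: for instance,
\[
\zeta\bigl(i_{O,W}([(r,s)])\bigr) = [(-((ir)^*)^{-1}s^*, 1)] = [(i\cdot(-(r^*)^{-1}s^*),1)] = i_{O,W}\bigl(\zeta([(r,s)])\bigr),
\]
using $(ir)^* = -ir^*$ and $(-i)^{-1} = i$, with analogous computations for $i_{N,S}$ and $i_{F,B}$. Once centrality is established, $\bV = \bV_0 \sqcup \zeta\bV_0 \cong \bV_0 \times \langle\zeta\rangle$ has order $48$ and is isomorphic to $\fS_4 \times \Z_2$, which is precisely the full octahedron group. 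I expect this centrality check to be the main obstacle, since in the noncommutative $*$-algebra setting one must treat lines consistently as right $\bA$-submodules and use that $i \in \bC \subseteq \bA$ is a central scalar with $i^* = -i$; once these conventions are pinned down the computation collapses to a short manipulation with $*$ and $i$.
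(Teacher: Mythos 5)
Your proposal is correct in substance, but both of its key structural steps take a different route from the paper's, and the differences are worth spelling out. For item (1) the paper never passes through the quotient $\bV_0\to\fS_3$: it enumerates $\bV_0$ outright, using the covariance relation $g\circ\lambda_{a,b}\circ g^{-1}=\lambda_{g.a,g.b}$ to produce the matrix and the permutation symbol of every element (Klein four-group, six $4$-cycles, six transpositions, eight $3$-cycles), and this explicitness is not wasted: the resulting tables are reused later, e.g.\ the cycle $(SBO)(WNF)$ of the Cayley transform enters the proof of Theorem \ref{th:U-complete}. Your counting argument is leaner, but to close it you must justify two points you only assert: that $\bV_0$ acts faithfully on the six poles (otherwise $\vert\bV_0\vert$ could exceed $24$), and that the kernel of $\bV_0\to\fS_3$ is no bigger than the Klein four-group. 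Both follow by restricting to the classical sphere $\bC\PP^1\subset\bA\PP^1$ --- legitimate since all generating matrices have entries in the central subfield $\bC$, so two of them induce the same map of $\bA\PP^1$ if and only if they are proportional --- and then invoking the fact that a M\"obius transformation is determined by its values at three points; without this reduction the claim that the kernel is exactly the Klein group is not obvious over a general $*$-algebra. For item (3), reducing centrality of $\zeta$ to commutation with the three generators is sound, but note that your displayed computation only treats $i_{O,W}$, whose matrix is diagonal; for $i_{N,S}$ and $i_{F,B}$ the chart manipulation is noticeably messier, and the painless way to carry out your ``analogous computations'' is exactly the paper's device: every matrix $M$ in the tables satisfies $M^*M=c\cdot 1$ for a scalar $c$ (e.g.\ $R^*R=2\cdot 1$), i.e.\ it is unitary up to scalar for the Euclidean form, and any such projective map commutes with the orthocomplementation $\zeta$. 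That one observation settles centrality for all $24$ holomorphic elements simultaneously and, as a by-product, yields the paper's identity $\zeta=\tau^{NS}_O\circ(-1)_{N,S}$ linking $\zeta$ to the Hermitian real form, a relation your route does not produce.
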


\begin{proof}
A detailed proof is given in Section \ref{app:proof}. 
There we also describe the holomorphic maps in several ways:
by $2\times 2$-matrices with coefficients in $\bA$, acting on the complex
algebra $\bA$ by fractional linear maps in the usual way,
\begin{equation}
\begin{pmatrix} a & b \\ c & d \end{pmatrix}. z = (az+b)(cz+d)^{-1} ,
\end{equation}
as well as by ``intrinsic formulae'' realizing them as compositions of maps $\lambda_{ab}$. 
The antiholomorpic maps are also defined by matrices: 
let $M \in \Gl(2;\bA)$ an invertible $(2\times 2)$-matrix with coefficients in $\bA$.
Each such matrix defines a sequilinear form
$$
\langle (u,v),(u',v') \rangle_M  := \langle (u,v) M,(u',v') \rangle = \sum_{ij} u_i^* m_{ij} v_j .
$$
Then
the orthocomplementation map $\cS \to \cS$, $x \mapsto x^{\perp,M}$ is an antiholomorphic bijection of $\cS$ (and an anti-automorphism in the sense
of associative geometries, see \cite{BeKi2}), and the same
matrices used to describe the holomorphic bijections then also describe the antiholomorphic bijections belonging to the
octahedron group.
\end{proof}

\begin{remark}
The octahedral symmetry appears also on a more profound level as the symmetry of the whole theory of
associative lines -- see \cite{Be12}, Section 9.
There should be a link with the octahedral symmetry that we have described here, but for the moment this remains
rather mysterious.
\end{remark}

\subsubsection{Transitivity on poles, and Cayley transforms}
The group $\bV_0$ acts transitively on vertices of the octahedron (the six poles).
Eeach stabiliser group has therefore $\frac{24}{6} = 4$ elements.
The stabiliser of the north pole $N$ is the group
\begin{equation}\label{eqn:NS-stabiliser}
(\bV_0)_N = \{ \id , i_{N,S} , (-1)_{N,S}, (-i)_{N,S} \} \cong \Z / 4\Z .
\end{equation}
Each of its elements stabilises also the element $S$. 
Likewise, the stabiliser of $N$ in $\bV$ has $\frac{48}{6}=8$ elements, and hence there are also $4$ antiholomorphic elements stabilizing $N$.
By transitivity, it follows that
for each pair $(a,b)$ of vertices, there are exactly $4$ holomorphic and $4$ antiholomorphic elements $g \in \bV$ such that $g(a)=b$.
By definition of $\bV$, they have the property that then also $g(a')=b'$, when $(a,a')$ and $(b,b')$ are opposite poles.
For instance, the $4$ holomorphic transformations $g$  sending $(N,S)$ to $(W,O)$   are (notation as in Tables
\ref{ssec:table-holomorphic})
\begin{equation}\label{eqn:Cayley2}
C= (SBO)(WNF),  \quad  (NBW)(SFO), \quad (NW)(SO)(FB), \quad (OSWN) .
\end{equation}
The first of these is the (usual) {\em Cayley transform}, having order $3$; the 
second its negative $-C$; the third a transposition, and the last a $4$-cycle. 
In the same way there are four elements sending  $(N,S)$ to $(O,W)$, among them two elements of order 3
(Cayley transforms), one of order 2, and one of order 4. 

\subsubsection{Center, commutant, matrix realization}
From the definition of the $\bA$-unitary groups, and the Table in Subsection \ref{ssec:table-antiholomorphic},
we get the following descriptions of unitary groups of $2\times2$-matrices as group of maps commuting with 
antiholomorphic maps:
\begin{align*}
\Gl(2,\bA)^\zeta & = \UU(2,\bA), \\
\Gl(2,\bA)^{\zeta \circ (-1)_{O,W}} & = \UU(1,1;\bA) := \UU(I_{1,1},\bA) , \\
\Gl(2,\bA)^{\zeta \circ (-1)_{F,B}} & = \UU(F,\bA) = R \, \UU(1,1;\bA) \,R^{-1}  , \\
\Gl(2,\bA)^{\zeta \circ (-1)_{N,S}} & = \UU(J;\bA) = C \,  \UU(I_{1,1},\bA) \, C^{-1} 
\end{align*}
where the last three groups are isomorphic among each other, by conjugation via the matrices $R$, resp.\ $C$.
Intersecting the first and second of these groups, we get
$$
\Gl(2,\bA)^\zeta \cap \Gl(2,\bA)^{\zeta \circ (-1)_{O,W}}  =  \UU(\bA) \times \UU(\bA)
$$
(diagonal matrices with both diagonal entries from $\UU(\bA)$, and conjugating with the Cayley transform $C$
$$
\Gl(2,\bA)^\zeta \cap \Gl(2,\bA)^{\zeta \circ (-1)_{N,S}} = C (\UU(\bA) \times \UU(\bA) ) C^{-1}.
$$
The right hand side can be identified with $\UU \times \UU$, whence
\begin{equation}
\UU \times \UU = \{ h \in \Gl(2,\bA) \mid \, \zeta \circ h =h \circ \zeta, \, 
(-1)_{N,S}\circ h = h \circ (-1)_{N,S} \}.
\end{equation}
In words, left and right translations by elements of $\UU$ on $\cS$ are exactly those transformations commuting
with the {\em antipode map} $\zeta: z \mapsto -\overline z^{-1}$, and with the {\em central inversion map}
$z \mapsto - z^{-1}$; or with the principal involution
$\tau(z) =\overline z$, and the central inversion.
Among the elements of the octahedron group, also the elements $i_{N,S}$ and $(-i)_{N,S}$ commute with the
$\UU \times \UU$-action: the commutant of this action in the octahedron group is precisely the stabiliser group
(\ref{eqn:NS-stabiliser}). These belong to the {\em center of $\UU$}.
Of course, there are also $4$ anti-unitary transformations commuting with the $\UU\times \UU$-action; 
it seems that they have deserved only little attention so far, in quantum theory.

\begin{remark}
All the preceding results are algebraic in nature, and are valid in geometries defined over general base fields and rings.
 This is in keeping with the Jordan algebraic approach to the Cayley transform, developed by
Loos (Section 10 in \cite{Lo77}); it should be compared with the Lie theoretic approach (Koranyi-Wolf \cite{KW65})), which uses
analytic, transcendental methods (one needs the exponential map, hence some completeness assumptions on the base field, that is,
one works over the reals and mainly in finite dimension).
\end{remark}

\subsection{The unitary group contains all affine parts}
The following is the most important structural result on the unitary group, from the viewpoint of completed quantum mechanics:  it says that
$\UU$ is a ``completion'' of linear quantum mechanics, in the same way as $\cR$ has been considered its completion in Part I:

\begin{theorem}[Affine completeness of $\UU$]\label{th:U-complete}
Assume $\bA$ is a $P^*$-algebra. Then, with notation as in Definition \ref{def:U-setting}, 
the unitary torsor $\UU$ contains all affine cells defined by all of its elements: for all $a \in \UU$, 
$$
(U_a \cap \cR) \subset \UU.
$$
\end{theorem}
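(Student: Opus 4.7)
The plan is to reduce the general case to the single base case $a = O$ via the natural action of $\UU \times \UU$, and then to verify that base case by a direct computation which is closed off by the $P^*$-positivity axiom.

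\emph{Step 1: Reduction to $a = O$.} The group $\UU \times \UU$ acts on $\cS$ by the diagonal matrices $\bigl(\begin{smallmatrix} u & 0 \\ 0 & v \end{smallmatrix}\bigr)$ with respect to the decomposition $\bA^2 = N \oplus S$, where $u, v \in \UU(\bA,*)$. This action manifestly fixes both $N$ and $S$; it preserves the hyperbolic form $\beta$ defining $\tau$ (because $u, v$ are unitary), hence preserves $\cR = \cS^\tau$ and therefore also $\UU = \cR \cap U_N \cap U_S$; and, via the identification $\UU \cong \UU(\bA,*)$ of Theorem~\ref{th:UU1}, it restricts to transitive two-sided multiplication on $\UU$. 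Since the action also preserves transversality, it maps $U_O \cap \cR$ bijectively onto $U_a \cap \cR$ for any prescribed $a \in \UU$. It therefore suffices to prove $U_O \cap \cR \subset \UU$.

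\emph{Step 2: The Hermitian chart.} I would rewrite everything in the chart $\bA \to U_O$, $x \mapsto [(x,1)]_{O,W}$, which is the standard chart for the Cayley/Hermitian picture. A direct change-of-basis computation between the $(O,W)$- and $(N,S)$-decompositions, combined with the Lagrangian condition $r^*r = s^*s$, identifies the real form in this chart as $\cR \cap U_O = \Herm(\bA)$. Hence every $z \in U_O \cap \cR$ has the form $z = [(x,1)]_{O,W}$ with $x \in \Herm(\bA)$. The two remaining transversality conditions $z \top N$ and $z \top S$, with $N = [(1,i)]_{O,W}$ and $S = [(1,-i)]_{O,W}$, become (by the Schur criterion for a $2 \x 2$ block matrix in $M_2(\bA)$ to be invertible) the invertibility of $x+i$ and $x-i$ in $\bA$.

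\emph{Step 3: The positivity argument.} Since $x^* = x$ and $i^* = -i$ (with $i$ central), a one-line calculation gives
\begin{equation*}
(x+i)(x-i) = (x-i)(x+i) = x^2 + 1.
\end{equation*}
Axiom (3) of a $P^*$-Jordan-Lie algebra, applied with $a = x \in V = \Herm(\bA)$ and $b = 1 \in \Omega$ (the case $a = 0$ of the same axiom forces $1 \in \Omega$), yields $1 + x^2 \in \Omega \subset V^\times$, so $x^2 + 1$ is invertible in $\bA$. The two factorizations above then exhibit both $x+i$ and $x-i$ as having two-sided inverses, so $z \top N$ and $z \top S$, whence $z \in \UU$ as required.

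The main obstacle I anticipate is not logical but book-keeping: making sure that the dictionary between the $(O,W)$- and $(N,S)$-pictures is laid out cleanly enough that the Lagrangian condition really corresponds to Hermiticity of $x$, and transversality to the two new poles really corresponds to invertibility of $x \pm i$. Once that dictionary is pinned down, the positivity axiom does all the real work in a single line, and it is there---not in the group-theoretic reduction---that the $P^*$-hypothesis is actually used.
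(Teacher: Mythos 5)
Your proposal is correct and takes essentially the same route as the paper: reduce to a single base point by transitivity of the unitary action, identify the real form in an affine chart with $\Herm(\bA)$, and close with the positivity computation $(x+i)(x-i)=x^*x+1\in\bA^\times$, which is exactly the paper's Lemma~\ref{la:Cayley}. The only difference is presentational: you convert transversality to $N$ and $S$ into invertibility of $x\pm i$ directly by a change-of-basis/Schur-complement computation, whereas the paper obtains the same translation by invoking the Cayley transform as a projective automorphism realizing the permutation $(SBO)(WNF)$, so that $C(x)\top W$ iff $x\top N$.
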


\begin{proof}
More formally, taking account of the definition of $\UU$,
 the claim reads
 $$
 \forall a, x \in \cS : \qquad
\bigl(\tau(a)=a, \,
\tau(x)=x, \,  a \top N  , \, x \top a  \quad \Rightarrow \quad x \top N \bigr) .
$$
Since $\UU$ acts transitively on itself, we may assume without loss of generality that here $a=W$, so $U_a = \bA$ is the usual imbedding of
$\bA$ into the complex projective line $\cS = \bA \PP^1$.
Consider the Cayley transform $C(z)=(z-i)(z+i)^{-1}$ (third line of the last table in Subsection \ref{ssec:table-holomorphic}).
Positivity is used in the following 

\begin{lemma}\label{la:Cayley}
 When $\bA$ is a $P^*$-algebra, then the four Cayley transforms are defined on all of $\Herm(\bA)$.
 In other terms, for all $x \in \Herm(\bA)$, the value $C(x)$ belongs to $\bA = U_{O,W}$.
 \end{lemma}
 
 \begin{proof}
We have to show that, for all $z \in\Herm(\bA)$, the element $z + i $ is invertible in $\bA$.
Now, $(z+i)(z-i)=z^2 - i^2 = z^* z +1$  is invertible in $\bA$ by the axioms of a $P^*$-algebra, and thus
both $z+i$ and $z-i$ are invertible, too. 
\end{proof}

To finish the proof of the theorem,
assume $z=x$ with $\tau(x)=x$. By the lemma, the value $C(x)$ is finite, which means that $C(x) \in \bA$, that is, $C(x) \top W$.
But, according to the table from Subsection \ref{ssec:table-holomorphic}, $C$ represents the permutation 
$(SBO)(WNF)$ of the six poles. Since $C^{-1}$ is an automorphism, it preserves transversality, and hence
$C(x) \top W$ implies that $x \top C^{-1}(W) = N$, which had to be shown.
\end{proof}

\begin{remark}
For the sake of the proof, one could have worked with other transformations instead of $C$: all of the holomorphic
transformations from the octahedron group sending $(N,S) \mapsto (W,O)$ (see Equation (\ref{eqn:Cayley2})) are defined everywhere on
$\Herm(\bA)$ and could be used.
 However, the Cayley transforms are  preferred,
since they belong to the group $A_4$, whereas the transpositions and $4$-cycles  belong to ${\mathfrak S}_4 \setminus A_4$.
\end{remark}

\subsection{Antipode map, and self-duality again}
In Part I, we have stressed the aspect that the projective line over an algebra is {\em self-dual}.
Now, in the unitary setting, this self-duality appears in another shape:
fixing the pair of poles $(N,S)$ as ``canonical'', the {\em antipode map} is also ``canonical'':
\begin{equation}\label{eqn:ant}
\ant : \UU \to \UU, \quad x \mapsto \ant(x) = (-1)_{N,S}(x) .
\end{equation}
The point $\ant(x)$ can be thought of as a  ``double of $x$''.
For each $p \in \UU$, the pair
$(p,\ant(p))$ is transversal, hence $U_{\ant(p)}^\tau$ is an affine space,  containing the point $p$.
Choosing $p$ as origin, this gives a vector space, denoted by $V_p : = (U_{\ant(p)},p)$, isomorphic to $\Herm(\bA)$.
In Part I, we have defined a {\em complete obstate} to be a quadruple
$(A,W;A_0,W_\infty)$ where the reference part $(A_0,W_\infty)$ is a transversal pair.
In the unitary setting (Def.\ \ref{def:U-setting}), in order to get an $\UU(\bA,*)$-invariant theory, one has to demand that both elements of the
reference part are antipodes of each other:

\begin{definition}\label{def:U-setting-bis}
A {\em complete observable}, in the unitary setting, is a pair $H =(h;p) \in \UU^2$ with $h \top \ant(p)$; 
a {\em complete state} is a pair $W = (w,p)$ with $w \top p$, and
a {\em complete obstate} is a triple $(h,w;p)$ with $w \top p, h \top \ant(p)$. 
The {\em space of complete observables, in the unitary setting}, is denoted by
$$
\cO_\UU = \{ (h,p) \in \UU^2 \mid \, h \top \ant(p)   \} .
$$
\end{definition}

\section{Time evolution in completed quantum theory}\label{sec:evolution}

\subsection{Tangent spaces: quantum convention}\label{sec:q-calculus} 
So far, in Part I and Part II, we have not yet used {\em differential calculus}.
We will start to use it now, and it is in this context that Planck's constant $\hbar$ will appear. 
As I understand the setting, on purely mathematical grounds Planck's constant is a quantity that distinguishes ``space'' from ``tangent space''.
I assume the reader is familiar with the notion of {\em tangent space} $T_p M$ of a manifold $M$ at the point $p$.
For infinite dimensional manifolds (like our $\cS, \cR$ and $\UU$), the definition of tangent spaces follows the classical pattern known in physics,
via the transformation properties of tangent vectors (see, e.g., \cite{Be08}, I.3): there is no particular problem about this.\footnote{ 
Problems arise only if you use definitions invoking, in way or another, duality of vector spaces, e.g., if you define tangent vectors as 
 point derivations. See \cite{Be08} for such issues.} 
Then, one notes that, if $M=V$ is a {\em vector space}, there is a {\em canonical identification} between $V$ and each of its tangent spaces $T_p V$.
However, if  you try, from a conceptual calculus viewpoint, to analyze what makes this identification ``canonical'', you realize that it is 
somewhat less canonical than one usually thinks.
First of all, the {\em sign} of this identification depends on your philosophy, because for $v \in V$, the differerential operator induced by the one-parameter
group $(x \mapsto x+tv)_{t \in \R}$ is the {\em negative} of the ``constant vector field $v$''.  Indeed, the sign of the ``canonical''  identification is a
{\em convention}.
But, moreover, the whole theory will not change its shape if, by convention, you plug in another, ``global'', invertible factor $\hbar$ into the ``identification
between $V$ and $T_p V$'':

\msk \nin
{\bf Quantum Convention.}
{\em
There is an invertible real number $\hbar$ such that, for all real or complex vector spaces $V$ and all $p \in V$, the map
$$
Q_p:
T_p V \to V, \quad v \mapsto \hbar v 
$$
is the correct {\em quantum identification between $V$ and the tangent space $T_p V$}. 
  } 
 
\msk
We could, in principle, forget the ``usual'' identification, and work only with the new, ``quantum'' one: since scalars commute with all linear maps, this
would not change in any essential way the shape of differential geometry. That would be the mathematical analog of choosing Planck units in physics, normalizing
$\hbar = 1$.  
For better readability of formulae, we shall use this normalization and suppress the symbol $Q_p$; but
 for sake of ``dimensional analysis'',  one should keep in mind that, whenever one identifies ``space'' and ``tangent space'' for vector spaces, then a $\hbar$-factor
 would come in.

\subsection{The tangent bundle of $\UU$}
The spaces $M = \cS, \cR, \UU$ from our setting of completed quantum mechanics are all (infinite dimensional) differentiable manifolds.
The smooth atlas is simply given by all ``affine parts'' $\bA \subset \cS$, resp.\ $\Herm(\bA) \subset \cR$, and
$\Herm(\bA) \subset \UU$ (see \cite{BeNe} for details). 
To the extent that these affine parts are canonical, their identification with tangent spaces will also be canonical.
For $p \in \UU$, recall from (\ref{eqn:ant}) the antipode $\ant(p)$ and the vector space
$V_p = (U^\tau_{\ant(p)},p)$. We shall identify it with the tangent space $T_p \UU$, with zero vector the origin $p$:
 the following linear isomorphism can be considered ``quantum canonical'':
\begin{equation}
T_p \UU \to U^\tau_{\ant(p)},\quad v \mapsto  v   
\end{equation}
(to be quantum-correct, one should write $v \mapsto Q_p(v)$ here...).
Putting all isomorphisms $T_p \UU \cong U_{\ant(p)}$ together, we 
get the quantum identification of the {\em tangent bundle} $T\UU$ with an open subset of $\UU \times \UU$: the map
\begin{equation}\label{eqn:tangent}
T\UU \to \{ (a,b) \in \UU^2 \mid a \top \infty(b) \}, \quad v \mapsto \bigl( \pi(v), v \bigr)
\end{equation}
is bijective,  where as usual $\pi:TM \to M$ is the base projection, associating to a tangent vector $v \in T_p M$ the footpoint $p \in M$. 
Comparing with Definition \ref{def:U-setting-bis}, and summarizing:

\begin{theorem}
In the unitary setting, and keeping account of the Quantum Convention, the tangent bundle $T\UU$ is identified with the space $\cO$
of complete observables, via (\ref{eqn:tangent}).
\end{theorem}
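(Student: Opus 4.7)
The plan is to unpack the two sides of the claimed identification and check that the formula (\ref{eqn:tangent}) is a well-defined bijection, the only real content being that the ``identification'' map $T_p\UU\to U^\tau_{\ant(p)}$ lands in $\UU$ and that its target matches the transversality condition defining $\cO_\UU$.

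First, I would recall the three ingredients already in place. The Quantum Convention fixes, for each $p\in\UU$, a linear isomorphism $\phi_p:T_p\UU\to V_p$, where $V_p=(U^\tau_{\ant(p)},p)$ is the vector space whose underlying set is the affine cell $U^\tau_{\ant(p)}\subset\cR$ with origin $p$. Thus the assembled map of (\ref{eqn:tangent}) is $v\mapsto(\pi(v),\phi_p(v))$ with $p=\pi(v)$. I then have to verify that $(p,\phi_p(v))\in\UU\times\UU$ and that it satisfies the condition $p\top\ant(\phi_p(v))$ defining $\cO_\UU$.

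For the first point, the key input is the affine completeness statement (Theorem \ref{th:U-complete}): since $p\in\UU$, the entire affine part $U_{\ant(p)}\cap\cR=U^\tau_{\ant(p)}$ is contained in $\UU$, so $\phi_p(v)\in\UU$. For the transversality, note that $\ant=(-1)_{N,S}$ is a holomorphic automorphism of $\cS$ of order $2$; it therefore preserves the transversality relation $\top$. Since $\phi_p(v)\in U^\tau_{\ant(p)}$ means $\phi_p(v)\top\ant(p)$, applying $\ant$ and using $\ant\circ\ant=\id$ gives $p\top\ant(\phi_p(v))$, which is exactly the defining relation of $\cO_\UU$. Hence the map lands in $\cO_\UU$.

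Bijectivity is then essentially formal. Given $(h,p)\in\cO_\UU$, the transversality $h\top\ant(p)$ (rewritten, as above, by applying $\ant$) together with $h\in\UU\subset\cR=\cS^\tau$ puts $h$ into $U^\tau_{\ant(p)}$; consequently $h=\phi_p(v)$ for a unique $v\in T_p\UU$, and $v\mapsto(p,h)$ is the inverse. I expect the main conceptual obstacle, and the only non-formal input, to be the appeal to Theorem \ref{th:U-complete}: without the $P^*$-positivity guaranteeing that every affine cell $U^\tau_{\ant(p)}$ actually lies inside the unitary torsor, the natural target of the tangent-vector map would sit in $\cR\times\cR$ rather than in $\UU\times\UU$, and the identification with $\cO_\UU$ would fail. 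Everything else, including the implicit factor $\hbar$ built into $\phi_p=Q_p^{-1}$, is bookkeeping dictated by the Quantum Convention.
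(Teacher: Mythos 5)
Your proposal is correct and takes essentially the same approach as the paper: the paper states this theorem with no separate proof, as a summary of the construction preceding it (the Quantum Convention identification $T_p\UU \cong U^\tau_{\ant(p)}$, assembled over all $p \in \UU$ into (\ref{eqn:tangent}) and compared with Definition \ref{def:U-setting-bis}), and your argument is precisely that unpacking, including the swap-symmetry $h \top \ant(p) \Leftrightarrow p \top \ant(h)$. Your explicit appeal to Theorem \ref{th:U-complete} to ensure the affine cell $U^\tau_{\ant(p)}$ actually lies in $\UU$ (so that (\ref{eqn:tangent}) lands in $\UU^2$ rather than $\cR\times\cR$) is exactly the input the paper uses implicitly, having noted just before that the ``linear parts'' determined by elements of $\UU$ belong to $\UU$ by that theorem.
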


\subsection{Cotangent spaces and cotangent bundle: duality again}
Usually, the cotangent space $T_p^* M$ is defined to be some topological dual space of the (topological) vector space $T_p M$. 
However, we do not wish to enter here into technical discussions about topologies and topological duals. In the situation of completed quantum theory,
such problems have already been mentioned in the context of {\em traces}  in Part I, Appendix E.  Keeping in mind the caveats discussed there,
we {\em define the quantum cotangent space} to be vector space
\begin{equation}\label{eqn:cotangentspace}
T_p^* \UU  := U_p^\tau, \mbox{ with origin } \ant(p) ,
\end{equation}
and with the bilinear pairing between $T_p\UU = U_{\infty(p)}$ and $T_p^*\UU = U_p$ defined by  the {\em expectation value}, as explained in Part I:
for  $\phi  \in T_p^* \UU$ and $v \in T_p \UU$ let 
\begin{equation}\label{eqn:cotangent}
\langle v,\phi \rangle = \tr \bigl(\CR (p,\ant(p); v, \phi ) \bigr)  .
\end{equation}
As said in Part I,  the pairing may take infinite values for certain pairs $(v,\phi)$,  depending on the topological setting (working with unbounded operators,
etc.); but this does not affect the preceding definition.  Assembling the spaces  (\ref{eqn:cotangentspace}), we get the {\em cotangent bundle}: the map
\begin{equation}
T^* \UU \to 
\{ (a,b) \in \UU^2 \mid a \top \ant(b) \}, \quad \phi \mapsto \bigl( \pi(\phi), \ant( \phi) \bigr)
\end{equation}
is bijective.
 Note the difference with (\ref{eqn:tangent}): just one symbol $\infty$. That is, 
\begin{equation}
T_p^* \UU = U_p^\tau = U_{\ant(\ant(p))}^\tau = T_{\ant(p)} \UU
\end{equation}
with zero vector $\infty(p)$, resp.\ $p$. Thus  $T\UU = T^* \UU$ as sets, but with projections and zero sections given by
\begin{equation}
\begin{matrix}
T\UU & = & T^* \UU \\
\downarrow & & \downarrow \\
\UU & \overset{\ant}{\longrightarrow} & \UU .
\end{matrix}
\end{equation}

\begin{theorem}
In the unitary setting, and keeping account of the Quantum Convention, the cotangent bundle $T^*\UU$ is identified  with the space
of complete states, via (\ref{eqn:cotangentspace}). Using the antipode map, the spaces of complete observables and and of complete states are in bijection with each
other.
\end{theorem}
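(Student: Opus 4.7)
The plan is to mirror the proof of the preceding theorem on $T\UU$, now with the origin of each fiber placed at $\ant(p)$ instead of $p$. First I would unwrap the two definitions involved: on the cotangent side, the fiber is $T_p^*\UU = U_p^\tau$ with zero vector $\ant(p)$ by (\ref{eqn:cotangentspace}); on the complete-states side, the fiber over $p$ in Definition \ref{def:U-setting-bis} is $\{w \in \UU \mid w \top p\}$. Thus the fiberwise task reduces to the identification $U_p^\tau = \{w \in \UU \mid w \top p\}$.

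Next I would appeal to the affine completeness theorem (Theorem \ref{th:U-complete}) to promote $U_p^\tau$ from a subset of $\cR$ to a subset of $\UU$. Since $p \in \UU \subset \cR$, we have $\tau(p) = p$, and Theorem \ref{th:U-complete} yields $U_p \cap \cR \subset \UU$; hence $U_p^\tau \subset \UU$. Every element of $U_p^\tau$ is transversal to $p$ by membership in $U_p$, so $U_p^\tau$ coincides with $\{w \in \UU \mid w \top p\}$ as required. Assembling the fibers over all $p \in \UU$ gives the bijection $T^*\UU \to \{(w,p) \in \UU^2 \mid w \top p\}$, which by Definition \ref{def:U-setting-bis} is precisely the space of complete states.

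For the second assertion, I would use that $\ant = (-1)_{N,S}$ restricts to an involution $\UU \to \UU$ (as recorded in (\ref{eqn:ant})) which, being a holomorphic automorphism of $\cS$, preserves transversality. The required bijection between complete observables and complete states is then $(h,p) \mapsto (h, \ant(p))$: the observable constraint $h \top \ant(p)$ rewrites as $h \top p'$ with $p' = \ant(p)$, which is the state constraint at the new base point. The inverse map $(w,q) \mapsto (w, \ant(q))$ works by $\ant \circ \ant = \id$.

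The only delicate step is the invocation of Theorem \ref{th:U-complete}, which rests on the $P^*$-hypothesis; absent that hypothesis, $U_p^\tau$ could meet $\cR \setminus \UU$, and the cotangent fiber would properly contain the fiber of complete states. With the $P^*$-assumption in force, however, the remainder is routine bookkeeping that parallels the tangent-bundle identification already established.
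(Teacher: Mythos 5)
Your proposal is correct and follows the paper's intended reasoning: the paper in fact states this theorem with no proof at all, treating it as immediate bookkeeping from the definition (\ref{eqn:cotangentspace}) of $T_p^*\UU = U_p^\tau$, from Definition \ref{def:U-setting-bis}, and from the observed identity $T_p^*\UU = T_{\ant(p)}\UU$, which is exactly the antipode correspondence $(h,p)\mapsto (h,\ant(p))$ you spell out. Your explicit appeal to Theorem \ref{th:U-complete} to get $U_p^\tau \subset \UU$ (so that each fiber genuinely consists of elements of $\UU$, under the $P^*$-hypothesis) is the one substantive step, and it is precisely the point the paper leaves tacit.
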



Having defined tangent and cotangent bundles, we can speak of {\em vector fields} (sections of $T\UU$) and
{\em $1$-forms} (sections of $T^* \UU$).

\subsection{The Lie algebra of the unitary group}\label{sec:LieU}
Now we shall use the fact that $\UU$ is a {\em torsor}, that is, a (Lie) group, after having fixed some base point $p \in \UU$.

\begin{definition} The {\em Lie algebra} of $\UU$ is the space $\uu$ of left-invariant vector fields on $\UU$.
\end{definition}

As usual in Lie theory, every tangent space $T_p \UU$
 can be identified with $\uu$: for each $p \in \UU$, there are inverse bijections,
the first given by
evaluation at $p$, the second  given by transporting a tangent vector by left translations to any
other tangent space,
\begin{equation}
\begin{matrix}
\uu \to T_p \UU, & \quad \xi \mapsto \xi(p)  
\\
T_p \UU \to \uu, & \quad v \mapsto \xi_v ,
\end{matrix}
\end{equation}
 where
$
\xi_v(u) = T_p \rho_{u,p} (v)
$,
and $\rho_{u,p} (x) = x p^{-1} u$ is right translation from $p$ to $u$. 
In other words, we have a diffeomorphism ``left trivialization of the tangent bundle''
\begin{equation}
\UU \times \uu \to T\UU, \quad (u, \xi) \mapsto \xi_u .
\end{equation}

\begin{theorem}
The left invariant vector fields form a Lie algebra (closed under the Lie bracket of vector fields).
\end{theorem}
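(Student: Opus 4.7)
The plan is to show that if $\xi, \eta \in \uu$ are left-invariant vector fields, then their bracket $[\xi,\eta]$ (the standard Lie bracket of vector fields on $\UU$) is again left-invariant. The antisymmetry and Jacobi identity then come for free from the generic properties of the bracket on $\mathfrak{X}(\UU)$, so this will identify $\uu$ as a Lie subalgebra of $\mathfrak{X}(\UU)$.

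First I would restate left-invariance in the form best suited for the argument: for each $g \in \UU$, write $L_g : \UU \to \UU$, $u \mapsto gu$ for left translation (a diffeomorphism). Then $\xi \in \uu$ means precisely that $\xi$ is $L_g$-related to itself for every $g$, i.e.\ $T L_g \circ \xi = \xi \circ L_g$. This is equivalent to the trivialization statement just given, because fixing any base point $p$ the assignment $u \mapsto T_p \rho_{u,p}(\xi(p))$ is exactly the unique vector field satisfying this relation and taking the prescribed value at $p$.

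The key step is the naturality of the Lie bracket of vector fields under smooth maps: whenever $X_1, X_2$ on one manifold are $\phi$-related to $Y_1, Y_2$ on another via a smooth map $\phi$, then $[X_1,X_2]$ is $\phi$-related to $[Y_1,Y_2]$. Applying this with $\phi = L_g$ and $X_i = Y_i$ running over $\{\xi,\eta\}$, I conclude that $[\xi,\eta]$ is $L_g$-related to itself for every $g \in \UU$, which is exactly the defining property of left-invariance. Hence $[\xi,\eta] \in \uu$.

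The only potential obstacle is technical, since our $\UU$ is infinite-dimensional: one must verify that the Lie bracket of vector fields is available and satisfies naturality in the differential-geometric framework used here. In the framework of \cite{Be08, BeNe}, the bracket is defined intrinsically via the canonical flip on the second-order tangent bundle $T^2 \UU$, and naturality with respect to smooth maps (in particular diffeomorphisms such as $L_g$) holds without any completeness or positivity hypothesis on $\bA$. Once this is granted, the argument above requires no further computation.
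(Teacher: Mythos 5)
Your proof is correct and follows essentially the same route as the paper: the paper's proof simply cites the general argument in arbitrary dimension (\cite{Be08}, I.5.3), which is precisely the $L_g$-relatedness plus naturality-of-the-bracket argument you spell out, including the point that in this framework the bracket is defined via the second-order tangent bundle and is natural without completeness assumptions. The only thing the paper adds that you omit is a supplementary remark special to this situation --- that $\uu$ embeds in the conformal Lie algebra of $\PP\Gl(2,\bA)$, so its elements extend to all of $\cS$ --- but this is not needed for the statement itself.
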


\begin{proof}
For the general proof (in arbitrary dimension), see eg., \cite{Be08}, I.5.3. 
In our special case it can moreover be shown that $\uu$ is a subalgebra of the {\em conformal Lie algebra} (Lie algebra of $\PP \Gl(2,\bA)$, in our
case), and hence its elements extend  to vector fields that are defined on all of $\cS$, see \cite{Be00, BeNe}.
\end{proof}

\nin
Since the choice of base point in $\UU$ is arbitrary,  all tangent spaces $T_p\UU$ are Lie algebras, and we have in fact
defined a {\em field of Lie algebras} on $\UU$, and
since $T_p \UU$ can also be identified with the Jordan algebra $\Herm(\bA)$, it is in fact a Jordan-Lie algebra.
The Lie algebra structure reflects  ``infinitesimal'' aspects of our setting, whereas the Jordan algebra structure rather reflects ``global''
aspects; therefore by our Quantum Convention, a factor $\hbar$ comes in, which corresponds to the Jordan-Lie
constant $\kk$ (cf.\ Remark \ref{rk:hbar}).

\subsection{Flow equation in completed quantum mechanics}
The preceding isomorphisms can be written
\begin{equation}
\begin{matrix}
\cO_\UU &  \cong & T\UU & \cong &  \UU \times \uu,
\end{matrix}
\end{equation}
and thus a complete observable $H = (h;p) \in \cO_\UU$ corresponds to a tangent vector $v \in T\UU$, and to a pair $(p,\xi_H)$ with a left invariant 
vector field $\xi_H\in \uu$. 

\begin{definition}
The {\em geometric Schr\"odinger equation} for the Hamiltonian $H = (h;p) \in \cO_\UU$
 is the flow equation of the left invariant vector field $\xi_H$:
 the flow $\Psi:\R \times \UU \to \UU$ is solution of
\begin{equation}
\frac{d}{dt} \Psi (t,x) =   \xi_H (\Psi(t,x)).
\end{equation}
\end{definition}

\subsubsection{Solution of the flow equation}
Under too weak topological assumptions on the algebra $\bA$, the flow equation need not admit a solution, nor will we have uniqueness of solutions.
However, under usual assumptions (e.g., $\bA$ is a Banach algebra), the algebra $\bA$ will admit an {\em exponential map}: 
 the usual exponential series
$e^X =\sum_{k=0}^\infty \frac{X^k}{k!}$ converges in $\bA$, and
$\frac{d}{dt} e^{tX} p = X (e^{tX}p)$. 
This implies that the Lie groups $\bA^\times$ and $\UU(\bA,*)$ also admit exponential maps, and by the general theory of Lie groups, 
the  solution of the flow equation will be given by
\begin{equation}
\Psi(t,x) = x . \exp_\UU ( t   \xi_h).
\end{equation}
Since $\UU \cong \UU(\bA,*)$, in terms of the algebra exponential, $\exp( t \xi_h)$ corresponds to $e^{it \xi_h}$.
For practical computations, one will return to the classical, linear, picture of time evolution: via the Cayley transform, 
the vector field $\xi_h$ is realized as a linear vector field on $\bA$, and we are back in ``business as usual''.
However,
transforming back, via the Cayely transform, the vector field can also be realized in ``Jordan coordinates'' (cf.\ \cite{Be00}) as a quadratic vector field.
Integrating a quadratic vector field leads to more complicated formulas (cf. \cite{Be00}, Section X.4).

\subsubsection{Equivalence of pictures}
The general pattern of Lie theory shows that each of the following determines the other, for a Lie group $G$:
\begin{enumerate}
\item
the adjoint representation $\Ad: G \to \GL(\g)$
\item
the coadjoint representation $\Ad^*: G \to \GL(\g^*)$
\item
the action of $G$ on its tangent bundle $G \times TG\to TG$
\item
the left (or right) action of $G$ on itself, $G \times G \to G$. 
\end{enumerate}

\nin
Mathematically, the equivalence of our geometric picture of unitary time evolution with the Schr\"odinger picture (1) or the
Heisenberg picture (2) follows from this pattern. 
Let's recall the basic arguments:
if the origin, say $p=O$, is considered to be fixed, then one considers the action of $G$ on itself by conjugation,
$G \times G \to G$, $(g,h) \mapsto ghg^{-1}$. It fixes $O$, and we can derive at $O$ and get
$\Ad:G \times \g \to \g$ where as usual $\g = T_O G$.
Likewise, we get $\Ad^*:G \times \g^* \to \g^*$.
The bilinear pairing $\g \times \g^* \to \R$ is $G$-invariant, that is,
\begin{equation}\label{eqn:Ad}
\langle \Ad(g) v, \Ad^*(g) \phi \rangle = \langle v,\phi \rangle,
\end{equation}
whence $\langle \Ad(g) v,  \phi \rangle = \langle v,\Ad^*(g)^{-1}\phi \rangle$,
which for $g = \exp(t \xi)$ gives 
\begin{equation}\label{eqn:Ad2}
\langle \Ad(\exp(t\xi)) v, \phi \rangle = \langle v ,\Ad^*(\exp( - t \xi)) \phi \rangle .
\end{equation}
The left hand side term describes time evolution of expectation values in the Schr\"odinger picture, and the right
hand side in the Heisenberg picture: both are equivalent  (cf. e.g. \cite{Tak08} p. 77 for this discussion).

\ssk
Now consider $G$ as a torsor, that is, no point of $G$ plays a distinguished role: ``all points are created equal''.
For $x,y \in G$, there is a {\em left translation from $y$ to $x$},
$L_{x,y}(z)=xy^{-1}z$, and a {\em right translation from $y$ to $x$},
$R_{x,y}(z)=zy^{-1}x$. When $y=O$, these are just the usual left and right translations by $x$.
The canonical pairing between $TG$ and $T^*G$ is invariant under 
general diffeomorphisms, hence is invariant {\em both} under the left and right action:
\begin{equation}\label{eqn:Ad3}
\forall p\in G, \forall v \in T_p G, \forall \phi \in T_p^* G: \quad
\langle v,\phi \rangle_p = \langle v.g , \phi.g \rangle_{p.g}  = \langle g.v,g.\phi \rangle_{g.p}.
\end{equation}
Thus, 
if we let act $G = \UU(\bA)$ on complete obstates from the left, or from the right, in the ``obvious'' way, and let
$g = \exp(t \xi_h)$,
then expectation values do not evolve at all: they are constant. 
If we want to ``observe expectation values that evolve'', we have to rewrite this condition in a similar way as
(\ref{eqn:Ad}) has been transformed into (\ref{eqn:Ad2}).
To this end, we let act $G$  on $TG$ via the left translations, and  on $T^* G$ via the right translations:
given a tangent vector $\xi \in T_p G$, let $q := \exp_p(\xi) \in G$ its image under the exponential map
defined at $p$ (note that the exponential map does not depend on choices: it is the same when working with
right or left invariant vector fields).
Then for $(v,\phi) \in T_p G \times T^*_p G$, let
\begin{equation}
(v_t,\phi_t) := (T_p L_{q,p} v, T^*_p R_{q,p} \phi) \in T_q G \times T_q^* G .
\end{equation}
Of course, the letters $L$ and $R$ could also be interchanged; the important point  is that both are used. 
Then, since $\Ad(g) = L_g \circ R_g^{-1}$, by invariance of the pairing under all left and all right translations,
 the expectation value coincedes with the time-depending expectation value from the usual linear theory:
\begin{equation}
\langle v_t, \phi_t \rangle_q = \langle \Ad(\exp(t \xi)) v, \phi  \rangle_p .
\end{equation}
Summing up,  these pictures are all mathematically equivalent --
the question whether one or the other of these pictures fits better with {\em physical interpretations}, remains, of course, open. 

\section{Some concluding remarks}

\subsection{Comparison with Hamiltonian mechanics}
Many  textbooks ``motivate'' the formalism of quantum mechanics by its analogy with the one of Hamiltonian mechanics.
Indeed, there is a strong structural analogy:

\msk
\begin{center}
\begin{tabular}{|l|c|c |}
  \hline
   & classical  & complete quantum theory \\
  \hline
 Hamiltonian as observable  &   $H \in F(M,\R)$   &  $ H =(p, h)$  \\
 \hline
Hamiltonian as vector field  & $X_H$  &  $\xi_H$
  \\
 \hline
 evolution equation  & flow of $X_H$   &  flow of $\xi_{H}$  \\
  \hline
\end{tabular}
\end{center}
\msk


\nin However,
I have the impression that  the chain of motivation should rather go the other way round:
Hamiltonian mechanics  historically precedes quantum mechanics, but logically and mathematically the quantum
side should be prior. 
The crucial ingegredients in the scheme are {\em duality}, and {\em differential calculus}: 

\msk
\begin{center}
\begin{tabular}{|l|c|c |}
  \hline
   & classical  & quantum  \\
  \hline
differentiate   &   $H \mapsto dH$   &  multiply by $\frac{1}{\hbar}$  \\
 \hline
dualize   & via symplectic form, or via Poisson-tensor  &  multiply by $i$ \\
\hline
 \end{tabular}
\end{center}
\msk

\nin
I think that both 
 topics, duality and differential calculus, could be better understood in the geometric, ``completed'', approach,
 and that the link with their classical roles should become clearer.

\subsection{On the measurement problem (Part III ?)}
We have not touched, neither in Part I nor in the present Part II, on the \href{https://en.wikipedia.org/wiki/Measurement_problem}{``Measurement problem''}.
Since our geometric setting is, mathematically, equivalent to the common linear setting, all ``solutions'' and ``interpretations'' that have been proposed,
could in principle be transferred to the geometric setting. For the time being, I have not succeeded in understanding the geometric and conceptual structures
that are relevant in this context. I expect that one should once again modify the setting:
the ``unitary setting'' introduced in this part is likely not to be the last word. 


\appendix

\section{More on Jordan-Lie algebras}\label{app:JLA}

\subsection{Tensor products}
For general Jordan algebras, there is no such thing as a ``tensor product of Jordan algebras'', nor is there for general
Lie algebras. Remarkably, for Jordan-Lie algebras the situation is better:

\begin{theorem}
Assume $V,W$ are Jordan-Lie algebras with same (non-zero) Jordan-Lie constant $\kk$. 
Then the $\K$-module
$V \otimes_\K W$ carries a natural structure of Jordan-Lie algebra with  Jordan-Lie constant $\kk$.
Its Jordan and Lie products are given by
\begin{align*}
(a \otimes b) \bullet (a' \otimes b') & = 
 (a \bullet a' ) \otimes (b \bullet b')  -\kk  [a,a']\otimes [b,b'] ,
\\
[(a\otimes b),(a'\otimes b')] & =
 (a\bullet a')\otimes [b,b'] + [a,a']\otimes (b\bullet b' )  .
\end{align*}
\end{theorem}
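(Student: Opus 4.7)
The plan is to reduce to the well-known fact that the $R$-tensor product of two involutive associative $R$-algebras is again involutive associative, and then invoke the Jordan-Lie / involutive-associative correspondence (Theorem \ref{th:JL1} together with Remark \ref{rk:anyK}). By the converse direction of that correspondence, $V$ and $W$ can be realized as Hermitian parts of involutive associative algebras $\bA_V = V \oplus jV$ and $\bA_W = W \oplus jW$ over a quadratic extension $R = \K \oplus j\K$ of $\K$; here the scalar $j$ is related to $\kk$ through the choice of normalization, and the involution is given on each factor by $a + jb \mapsto a - jb$. I would then form the $R$-tensor product $\bB := \bA_V \otimes_R \bA_W$ equipped with the componentwise involution $\sigma(x \otimes y) := x^* \otimes y^*$; this is an involutive associative $R$-algebra, so the forward direction of the correspondence automatically endows its Hermitian part with a natural Jordan-Lie structure of constant $\kk$.

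The second step is to identify this Hermitian part with $V \otimes_\K W$. Since the tensor is taken over $R$, the relation $jv \otimes w = v \otimes jw$ gives a $\K$-module decomposition $\bB \cong (V \otimes_\K W) \oplus j(V \otimes_\K W)$, on which $\sigma$ acts as $j \mapsto -j$; its $+1$-eigenspace is thus precisely $V \otimes_\K W$, which in this way inherits the desired Jordan-Lie structure of constant $\kk$.

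It remains to verify the explicit formulas. Fixing the convenient normalization $w = \tfrac{1}{2}$, $u = j$, the relation $\kk = - w^2/u^2$ forces $j^2 = -1/(4\kk)$, and the associative product in $\bA_V$ reads $aa' = a \bullet a' - 2\kk j\,[a,a']$ for $a, a' \in V$ (and similarly in $\bA_W$). Expanding $(a \otimes b)(a' \otimes b') = (aa') \otimes_R (bb')$ produces four terms that I would sort by $j$-degree: the $j^0$ part combines $(a \bullet a') \otimes (b \bullet b')$ with $(2\kk)^2 j^2\,[a,a'] \otimes [b,b'] = -\kk\,[a,a'] \otimes [b,b']$, reproducing the claimed Jordan formula; multiplying the $j^1$ part by $2u = 2j$ collapses, via the same identity $j^2 = -1/(4\kk)$, to precisely the claimed Lie bracket formula. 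The main obstacle is simply the bookkeeping of normalization constants across the three algebras $\bA_V, \bA_W, \bB$; once that is fixed, the four axioms (JL1)--(JL4) on $V \otimes_\K W$ follow for free from the associativity of $\bB$ together with the fact that $\sigma$ is an algebra involution, without any further nontrivial identity being needed.
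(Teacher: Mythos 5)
Your proposal is correct, and its engine is the same as the paper's: both proofs push the problem through the Jordan-Lie/associative correspondence (Theorem \ref{th:JL1}, resp.\ its scalar-extended form in Remark \ref{rk:anyK}) so that everything reduces to the associativity of a tensor product of associative algebras, with the displayed formulas falling out of a two-term decomposition of that associative product. The organization differs in a way worth recording. The paper splits into cases: when $-\kk$ is a square in $\K$ it realizes $V$ and $W$ themselves as associative $\K$-algebras, tensors over $\K$, and reads the formulas off the symmetric/skew-symmetric decomposition --- no involution and no ring extension occur; the non-square case is dispatched in one rather terse sentence asserting that $\Herm(V_R)\otimes_\K\Herm(W_R)$ is stable under the products. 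You instead run the scalar-extension argument uniformly, and in fact more cleanly than the paper's sketch: tensoring over $R$ (not over $\K$) enforces $jv\otimes w=v\otimes jw$, so the Hermitian part of $\bA_V\otimes_R\bA_W$ is \emph{exactly} $V\otimes_\K W$, namely the $+1$-eigenspace of $\sigma$, which explains transparently why the two products close on $V\otimes_\K W$; moreover your normalization ($w=\tfrac12$, $u=j$, $j^2=-1/(4\kk)$, whence $aa'=a\bullet a'-2\kk j\,[a,a']$ and $(2\kk)^2j^2=-\kk$) is self-consistent, whereas the constants in Remark \ref{rk:anyK} as printed are not (the claim $j^2=\kk$ clashes with $R=\K[X]/(X^2+\kk)$). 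What the paper's route buys is that in the square case no extension or involution is needed at all, and the degree of freedom in the choice of $w$ becomes visible; what yours buys is uniformity (no case distinction) and a precise eigenspace identification replacing the paper's unproved stability claim. The only point you should make explicit is that $j^2=-1/(4\kk)$ and the invertibility of $u=j$ in $R$ require $\kk\in\K^\times$ (and $2\in\K^\times$), so ``non-zero'' must be read as ``invertible''; since the paper's own proof needs the same reading, this is a shared hypothesis rather than a gap in your argument.
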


\begin{proof}
It is possible, though somewhat lengthy, to check directly the defining properties (JL1) -- (JL4).
A quicker proof is given by using Theorem \ref{th:JL1}. For simplicity,
assume first that $-\kk = \frac{w^2}{u^2}$ is a square in $\K$. 
Then $V$ and $W$ carry structures of associative algebras, inducing the Jordan-Lie structure according to
Theorem \ref{th:JL1}.
Let $V \otimes W$ be the 
\href{https://en.wikipedia.org/wiki/Tensor_product_of_algebras}{tensor product of the associative algebras} $V$ and $W$.
This is an associative algebra. We decompose the associative product into its symmetric and skew-symmetric parts:
\begin{align*}
(a \otimes b) \cdot (a' \otimes b') & = (aa') \otimes (bb') \\
& = (\frac{1}{2w} a \bullet a' + \frac{1}{2u} [a,a']) \otimes 
 (\frac{1}{2w} b \bullet b' + \frac{1}{2u} [b,b']) 
 \\
&= \frac{1}{4w^2} (a \bullet a' ) \otimes (b \bullet b') + \frac{1}{4 u^2} [a,a']\otimes [b,b'] +
\\
& \qquad \frac{1}{4 uw} ( (a\bullet a')\otimes [b,b'] + [a,a']\otimes (b\bullet b' ) ) 
\\ 
&= \frac{1}{2w} \Bigr(   \frac{1}{2w} (a \bullet a' ) \otimes (b \bullet b') + \frac{-k}{2w} [a,a']\otimes [b,b']     \Bigl) + \\
 & \qquad \frac{1}{2u} \Bigr(  \frac{1}{2w} ( (a\bullet a')\otimes [b,b'] + [a,a']\otimes (b\bullet b' ) )   \Bigl)
\end{align*}
The first term is a symmetric product and the second skew-symmetric, whence, again by Theorem \ref{th:JL1}, the following
two products define a Jordan-Lie algebra structure on $V \otimes W$
\begin{align*}
(a \otimes b) \bullet (a' \otimes b') & = 
\frac{1}{2w} (a \bullet a' ) \otimes (b \bullet b') + \frac{-k}{2w} [a,a']\otimes [b,b'] ,
\\
[(a\otimes b),(a'\otimes b')] & =
\frac{1}{2 w} ( (a\bullet a')\otimes [b,b'] + [a,a']\otimes (b\bullet b' ) ) .
\end{align*}
Now we choose $w=\frac{1}{2}$, $k=- \frac{1}{u^2}$, giving the formulae from the claim.
(One realizes that the choice of $w$ gives a degree of freedom in defining the tensor product of 
Jordan-Lie algebras.) 
If $-\kk$ is not a square in $\K$, then we work in the scalar extension of $V$ and $W$ by the ring
$R=\K[X]/(X^2 + \kk)$: we take the associative tensor product
$$
V_R \otimes_\K V_R = (\Herm(V_R) \oplus j \Aherm(V_R) ) \otimes_\K (\Herm(V_R) \oplus j \Aherm(W_R)) ,
$$
and
$\Herm(V_R) \otimes_\K \Herm(W_R)$ is indeed stable under the products $\bullet$ and $[-,-]$
defined above.
\end{proof}

\begin{remark}
The problem of defining tensor products of algebras  is the starting point of the paper \cite{GP76},
where the notion of {\em composition class} as a class of two-product
algebras closed under tensor products is introduced;  essentially, Jordan-Lie algebras are solution of the problem.
The idea to characterize quantum and classical mechanics as certain
composition classes  goes back to Niels Bohr.
\end{remark}

\subsection{Relation with the Jordan-Lie functor}\label{app:LJ}
There is ``ternary version of Jordan-Lie algebras'', sometimes called {\em Lie-Jordan algebras}, cf.\ references given in \cite{Be08b}.
Every (binary) Jordan algebra $(V,\bullet)$ gives rise to a ternary product, the {\em Jordan triple system (JTS)}
\begin{equation}
T(x,y,z) = (x\bullet y) \bullet z + x \bullet (y \bullet z) - y \bullet (x \bullet z) .
\end{equation}
For instance,
when $a \bullet b = w(ab+ba)$ in an associative algebra with product $ab$, then
$T(x,y,z) = 2 w^2 (xyz+zyx)$.
On the other hand, every JTS gives rise to a {\em Lie triple system (LTS)} $R=R_T$ via
\begin{equation}
R_T(x,y,z) = T(y,x,z) - T(x,y,z) .
\end{equation}
In \cite{Be00},
the correspondence $T\mapsto R_T$ has been called the {\em Jordan-Lie functor}.  Geometrically,
$R_T$ is the curvature tensor of a symmetric space that can be associated to $T$.
(However, the sign of $R_T$ is  a matter of quite delicate conventions.)
In the example of an associative algebra with product $xy$ and
 $T(x,y,z) = 2 w^2 (xyz+zyx)$, let $[x,y]=u(xy-yx)$;  then, by direct computation, 
$$
R_T(x,y,z) = 2 w^2 (yxz-xyz + zxy - zyx) = 2 \frac{w^2}{u^2} [[y,x],z] .
$$
In a Jordan-Lie algebra, condition (JL4) therefore gives
 $R_T(x,y,z) = - 2 \kk [[y,x],z]$.
When $\kk=-1$, this means that the curvature of the symmetric space is the triple Lie bracket of the Lie algebra,
hence the symmetric space is a Lie group, considered as symmetric space.
Indeed, the symmetric space of an ordinary associative algebra $\bA$ is the group $\bA^\times$, considered as
symmetric space.
When $\kk = + 1$, the condition means that the curvature is the negative of the triple Lie bracket of the Lie algebra,
that is, the symmetric space is the {\em $c$-dual symmetric space} of the Lie group belonging to the Lie algebra.
The symmetric space of a $*$-algebra is the ``cone'' $\bA^\times / \UU(\bA)$, which is a quotient of a
complex Lie group by a real form, hence has $c$-dual a group type space, namely $\UU(\bA)$. 
However, these concepts seem a bit too general for the setting of completed quantum mechanics: they also cover 
{\em orthogonal groups}, and not only unitary ones; that is, the complex structure (which is so important for quantum
mechanics) cannot be reconstructed from such data.

\subsection{On axiomatic definition of Jordan-Lie geometries}\label{sec:JL-geometry}
We do not attempt, in this text, to give an ``axiomatic'' definition of geometries belonging to Jordan-Lie algebras.
For Jordan-Lie constant $\kk = -1$, these are the {\em associative geometries} from \cite{BeKi};
however, it is not at all clear how to adapt axiomatics to the case of Jordan-Lie constant $\kk = 1$. 
In the present text, the corresponding object is defined 
 ``by construction'': the {\em unitary setting} from Definition \ref{def:U-setting}.
However, in the long run, it should be important to understand this geometry from an axiomatic and conceptual point of view.

\section{Action of the octahedral group}\label{app:octahedral}

\subsection{The abstract octahedral group}
By definition, the \href{https://en.wikipedia.org/wiki/Octahedral_symmetry}{\em (abstract) octahedral group} is the symmetry group of the regular octahedron,
which is the same as the symmetry group of the usual $3$-cube. It is a semidirect product of ${\mathfrak S}_4$ with $\Z/2 \Z$ and thus has 48 elements.
Recall its abstract definition:
we define the following sets of three, resp.\ six elements
\begin{equation}
S_0:= \{ 1,2,3\}, \qquad S:= \{ 1,2,3, 1',2',3' \} .
\end{equation}
We say that $S = \{ 1,1' \} \cup \{ 2,2'\} \cup \{3,3'\}$ is the {\em canonical partition}, or {\em canonical equivalence relation} $\sim$,
on $S$.

\begin{definition}
The {\em octahedral group $\bV$} is the subgroup of permutations $\sigma$ in $\fS_6 = \Bij(S,S)$ that are compatible with the canonical 
equivalence relation. In other words, whenever $j = \sigma(i)$, then also $j' = \sigma(i')$, or:
$\forall i$,  $\sigma (i') = (\sigma(i))'$.
\end{definition}

\nin
Directly from the last condition, we see that the permutation $\zeta$ defined by
 $\zeta(i)=i'$ belongs to the center of $\bV$, and hence we get a canonical
morphism 
\begin{equation}
\phi: \bV \to \fS_3 = \Bij(S/\sim), \quad \sigma \mapsto [\sigma] .
\end{equation} 
This morphism splits, by letting act $\fS_3$ on $S$ by permuting the $3$ symbols $1,2,3$,
 and its kernel
 is $(\fS_2)^3$, acting in each equivalence class by transposition or 
identity. This exhibits the structure of $\bV$ as a semidirect product
\begin{equation}\label{eqn:V1}
\bV \cong \fS_3 \rtimes (\fS_2)^3 ,
\end{equation}
whence $\vert \bV \vert = 48$. 
This presentation describes the action of $\bV$ on the six vertices of the octahedron (the set $S$).
On the  other hand, $\bV$ acts also on the $8$ faces of the octahedron (corresponding to the $8$ vertices of the cube), as follows:
the canonical partition of $\cS$ has 
$8$ sets of representatives (one for each subset of $S_0$), giving rise to $4$ equivalence relations on $S$ that are
transversal to $\sim$ (meaning that each equivalence class is a system of representatives for $\sim$).
By letting act $\bV$ on subsets of $S_0$, we get an injective morphism
$\bV \to \fS_8$, and a morphism 
$\bV \to \fS_4$ having kernel $\{ \id,\zeta \}$.
 For reasons of cardinality, the latter morphism must be surjective, exhibiting the structure of $\bV$ as
 \begin{equation}\label{eqn:V2} 
 \bV = \bV_0 \times \{ \id ,\zeta \} \cong  \fS_4 \times \fS_2 
 \end{equation}
 where $\bV_0$ is a subgroup of $\bV$ isomorphic to $\fS_4$. 
Comparing (\ref{eqn:V1}) with (\ref{eqn:V2}), we realize $\fS_4$ as semidirect product of 
$\fS_3$ with the {\em Klein $4$-group}  $V = \fS_2 \times \fS_2$.

\subsection{Action of the octahedral group on the projective line $\bA\PP^1$}\label{app:table}
Next we are going to describe how the octahedral group $\bV$  acts on the projective line $\cS = \bA \PP^1$ of a $*$-algebra $\bA$.
We fix the $6$-tuple of poles 
$(O,W; F,B; N,S)$ (east-west, front-back, north-south), as in Subsection \ref{ssec:six-poles}. 
It corresponds to the $6$-tuple   $(1,1';2,2';3,3')$ from the preceding subsection.
Our aim is realize the abstract group $\bV$ as a group of holomorphic and antiholomorphic transformations of $\cS$, that is generated
by certain geometrically defined transformations (Theorem \ref{th:octahedral}), see proof given below, Subsection \ref{app:proof}. 
We start by recording the effect of certain intrinsically defined transformation that permute the $6$ poles. 
 We alsor give the matrix (linear map $\bA^2 \to \bA^2$ inducing this map), 
and finally the ``complex coordinate formula'', obtained by choosing
$(O,W,F)=(0,\infty,1)$ as base point used to identify the Riemann sphere taken out $\infty$ with $\bC$ (then
$(B,N,S)=(-1,i,-i)$).
The matrix formulae generalize to the sphere $\cS$ of an arbitary $*$-algebra $(\bA,*)$, upon replacing
$\overline z$ by $z^*$. 

\subsubsection{
List of holomorphic transformations.}\label{ssec:table-holomorphic}
 There are 24 of them; they form a group $\bV_0 \cong \fS_4$, and we organize our
list following the structure of that group: first its normal subgroup $V$,
then the six $4$-cycles,
then the six transpositions not in $V$,
and finally the eight $3$-cycles:

\msk
\nin
\begin{tabular}{l | l | l | l}
elements of Klein $4$-group   & intrinsic formula  &  matrix &  complex  formula \cr
\hline
$\id$ & $\id$ & $\bigl( \begin{smallmatrix} 1& 0\\ 0&1 \end{smallmatrix} \bigr)$ &  $z \mapsto z$
\cr 
$(NS)(FB)$
&$ (-1)_{O,W}$ & $\bigl( \begin{smallmatrix}-1 &0 \\ 0& 1 \end{smallmatrix} \bigr)$ 
& $z \mapsto -z$
\cr
$(OW)(NS)$ &
$ (-1)_{F,B} $&  $\bigl( \begin{smallmatrix}0 &1 \\ 1 &0 \end{smallmatrix} \bigr)$ &
$z \mapsto z^{-1}$
\cr
$(OW)(FB))$ &
$ (-1)_{N,S}$ & $\bigl( \begin{smallmatrix} 0& 1 \\ -1 &0 \end{smallmatrix} \bigr)$  &
$z \mapsto - z^{-1}$
\end{tabular}

\msk
\nin
Remark. 
We have $(-1)_{O,W} = J^{FB}_O = J^{FB}_W = $ and
$(-1)_{F,B} =  J^{OW}_F$, etc.; thus we get symmetries of the form $J^{xz}_y$ with $(x,z)$ a pair of opposite poles.
Note that, if $x,y$ are not opposite poles, then $J^{xy}_z$ does not preserve our octahedron. 

\msk 
\nin
\begin{tabular}{l | l | l | l}
$4$-cycles in $\fS_4$ & intrinsic formula  & matrix & 
complex formula  \cr
\hline
$(FNBS)$ & $i_{O,W}$& $\bigl( \begin{smallmatrix} i& 0\\ 0&1 \end{smallmatrix} \bigr)$ &  
$z \mapsto i z$
\cr
$(SBNF)$ & $(-i)_{O,W} = i_{W,O}$& $\bigl( \begin{smallmatrix} - i & 0\\ 0&1 \end{smallmatrix} \bigr)$ &  
$z \mapsto -i z$
\cr
\hline
$( FWBO) $& $ i_{N,S}$ &  $R = \bigl( \begin{smallmatrix} 1& -1 \\ 1 &1 \end{smallmatrix} \bigr)$ &  
$z \mapsto (z-1)(z+1)^{-1}$ 
\cr
$(OBWF)$& $ (-i)_{N,S} = i_{S,N}$& $R^* = \bigl( \begin{smallmatrix} 1& 1 \\  -1 &1 \end{smallmatrix} \bigr)$ &  
$z \mapsto - (z+1)(z-1)^{-1}$
\cr
\hline
$(NWSO)$
& $i_{F,B}$&  $\bigl( \begin{smallmatrix} 1& i \\ i &1 \end{smallmatrix} \bigr)$ &  
$z \mapsto  (z+i)(iz+1)^{-1}$
\cr
$(OSWN)$
& $(-i)_{FB}=i_{B,F}$& $\bigl( \begin{smallmatrix} 1& -i \\ - i&1 \end{smallmatrix} \bigr)$ &  
$z \mapsto   (z-i) (-iz +1)^{-1}$
\end{tabular}

\msk
\nin
In the preceding table,  horizontal lines arrange a $4$-cycle together with its inverse.
In the following table, they arrange a transposition together with a transposition commuting with it:

\msk
\nin
\begin{tabular}{l | l | l | l}
transposition in $\fS_4$  
& intrinsic formula &  matrix & complex formula \cr
\hline
$(NF)(SB)(OW)$ & $ (-1)_{F,B}\circ i_{W,O} = (-1)_{N,S} \circ i_{O,W}  $ & $\bigl( \begin{smallmatrix} 0 & 1\\ i &0 \end{smallmatrix} \bigr)$ &   $z\mapsto -i z^{-1}$ 
\cr
$(NB)(SF)(OW)$ & $ (-1)_{F,B}\circ i_{O,W} = (-1)_{N,S} \circ i_{W,O}  $
 & $\bigl( \begin{smallmatrix} 0 & i \\ 1 &0 \end{smallmatrix} \bigr)$ &   $z \mapsto i z^{-1}$ 
\cr
\hline
$(FO)(BW)(NS)$ &  $(-1)_{F,B} \circ i_{S,N} = (-1)_{O,W} \circ i_{N,S}$ & $\bigl( \begin{smallmatrix} -1& 1\\ 1&1 \end{smallmatrix} \bigr)$ &  $z\mapsto (1-z)(z+1)^{-1}$
\cr
$(FW)(BO)(NS)$ &  $(-1)_{F,B} \circ i_{N,S} = (-1)_{O,W} \circ i_{S,N}$
 & $\bigl( \begin{smallmatrix} 1 & 1\\ 1&-1 \end{smallmatrix} \bigr)$ &  $z\mapsto (z+1)(z-1)^{-1}$
\cr
\hline
$(NO)(SW)(FB)$ & $(-1)_{N,S}\circ i_{B,F} = (-1)_{O,W}\circ i_{F,B}$
  & $\bigl( \begin{smallmatrix} -i & 1 \\ -1  &i \end{smallmatrix} \bigr)$ &  $z\mapsto  (1-iz)(i-z)^{-1}$
\cr
$(NW)(SO)(FB)$ &  $(-1)_{N,S}\circ i_{F,B} = (-1)_{O,W}\circ i_{B,F}$
  & $\bigl( \begin{smallmatrix} -1& i \\ -i &1 \end{smallmatrix} \bigr)$ &  $z\mapsto (i-z)(1-iz)^{-1}$
\end{tabular}

\msk
\nin
The composition of two commuting transpositions belongs to $V$, 
and the composition of any of the other two transpositions gives a $3$-cycle. Altogether, we get eight $3$-cycles
(one for each face of the octahedron),
which all deserve to be called a ``Cayley transform''.
The \href{https://en.wikipedia.org/wiki/Cayley_transform#Complex_homography}{``official''  Cayley transform}  is
given in the  third line: 
$C.z = (z-i)(z+i)^{-1}$.  
We arrange a cycle   together with its inverse:

\msk
\nin
\begin{tabular}{l | l | l | l}
$3$-cycles in $A_4$ 
  & intrinsic formula  & matrix &
complex formula  \cr
\hline
$(NBO) (SFW) $
&$ i_{O,W} \circ i_{F,B} $ & $\bigl( \begin{smallmatrix} i & -1\\  i&1 \end{smallmatrix} \bigr)$ 
& $z \mapsto  (z+i)(z-i)^{-1}$
\cr
$(NOB) (SWF)$
&$i_{B,F} \circ  i_{W,O} $ & $\bigl( \begin{smallmatrix}  - i & -i \\ -1 &1 \end{smallmatrix} \bigr)$ 
& $z \mapsto (iz+1)(z-1)^{-1}$
\cr
\hline
$(SBO) (WNF)$ &
$ i_{N,S}\circ i_{W,O}$ & $\bigl( \begin{smallmatrix} 1 & -i \\ 1&i \end{smallmatrix} \bigr)$ &  
$z \mapsto (z-i)(z+i)^{-1}$
\cr
$(SOB)(NWF)$ & $i_{O,W}\circ i_{S,N}$
 & $\bigl( \begin{smallmatrix} i & i\\ i &-1 \end{smallmatrix} \bigr)$ &  
 $z \mapsto i(z+1)(1-z)^{-1}$
  \cr
  \hline
$(NBW)(SFO)$ & $i_{O,W}\circ i_{B,F}$
 & $\bigl( \begin{smallmatrix} -1&i \\ 1 &i  \end{smallmatrix} \bigr)$ &  
 $z \mapsto (i-z)(z+i )^{-1}$
\cr
$(WBN)(FSO)$ &  $i_{F,B} \circ i_{W,O}$
& $\bigl( \begin{smallmatrix} -i& i\\ 1&1 \end{smallmatrix} \bigr)$ &  
$z \mapsto i (1-z)(1+z)^{-1}$
\cr
\hline
$(SWB)(NOF)$ & $i_{O,W}\circ i_{N,S}$
 & $\bigl( \begin{smallmatrix} i & -i\\ 1&1 \end{smallmatrix} \bigr)$ &  
 $z\mapsto i (z-1) (z+1)^{-1}$
\cr
$(SBW)(NFO)$ & $i_{S,N}\circ i_{W,O}$
& $\bigl( \begin{smallmatrix} 1 & i \\ -1  &i \end{smallmatrix} \bigr)$ &  
$z \mapsto (z+i) (i-z)^{-1}$
\end{tabular}

\subsubsection{Antiholomorphic transformations.}\label{ssec:table-antiholomorphic}
In the standard chart, the anthiholomorphic transformation $z \mapsto z^*$ does not belong to the central element, but describes
the ``Hermitian real form'' $\tau^{NS}_O$. 
The central element is the {\em antipode map}
$\zeta = \tau^{NS}_F \circ (-1)_{N,S}$, given by the complex formula
$z \mapsto - \overline z^{-1}$.
It is the orthocomplement map with respect to the positive (``Euclidean'') form on $\bA^2$. 
On the usual Riemann sphere, it has no fixed point.
Via $f \leftrightarrow \sigma \circ \zeta$, we may again identify the 24 antiholomorphic maps with elements of $\fS_4$,
and organize the tables as above.
However, we will not give the full list, since most of them  are hardly used.
We only list the three ``major'' real forms belonging to $\zeta \circ g = g \circ \zeta$, where $g$ is a holomorphic transformation
belonging to the Klein $4$-group. They are orthocomplementation maps with respect to  forms given by the
``form matrices'' defined in equation (\ref{eqn:formmatrices}): 

\msk
\nin
\begin{tabular}{l | l |   l | l }
Klein $4$-torsor   & intrinsic formula  &  matrix & complex  formula \cr
\hline
 $(NS)(OW)(FB)$ & $\zeta$  & $1$ &  $z \mapsto - \overline z^{-1}$ antipode (central)
\cr 
$(NS)$
&$\tau^{NS}_F = \zeta \circ (-1)_{N,S}$ & $J$ 
& $z \mapsto \overline z$ Hermitian real form
\cr
$(FB)$ &
$\tau^{FB}_O = \zeta \circ (-1)_{F,B}$& $F$ & 
$z \mapsto - \overline z$ skew-Hermitian real form
\cr
$(OW)$ &
$\tau^{OW}_F = \zeta \circ (-1)_{O,W}$ & $I_{1,1}$ & 
$z \mapsto  \overline z^{-1}$ unitary real form
\end{tabular}

\msk
\nin
Each of the eight Cayley transforms permutes the three real forms cyclically, while commuting with $\zeta$. 
There are also six other,  ``minor'' or ``diagonal'', real forms $g \circ \zeta$, where $g$ is one of the six transpositions. 
For instance, 
$(FS)(NB)$ corresponds to $\zeta \circ (-1)_{F,B}\circ i_{W,O}$, given by the complex formula
$z \mapsto i \overline z$. The interested reader may write up the complete list, as well
as those of the antiholomorphic transformations of order $3$ and $4$. 

\subsubsection{Proof of Theorem \ref{th:octahedral}}\label{app:proof}
To give a most intrinsic proof, start with a transversal triple; without loss of generality, we may assume that
it is of the form $(O,W;F)$. Then define 
$B:=(-1)_{O,W} F$, $N:= i_{O,W} F$, $S:=i_{O,W} B = i_{W,O} F$.
Thus, by definition, the transformations 
$(-1)_{O,W}$, $i_{O,W}$, and $i_{W,O}$ preserve the set of $6$ vertices, and have the description given in the tables.
To compute formulae for other transformations,  we use that,
since $\Gl(2,\R)$ acts by automorphisms of the geometry, for every $g \in \Gl(2,\R)$ we have
$g \circ \lambda_{a,b}\circ g^{-1} = \lambda_{g.a,g.b}$.
In particular, if $g$ permutes the $6$ vertices, we may apply this with $\lambda \in \{ -1, i, -i\}$ and $(a,b)=(O,W)$,
to get formulae for $\lambda_{g.a,g.b}$.
Define the matrices, belonging to $\Gl(2,\bA)$,
\begin{equation}\label{eqn:formmatrices}
R:= \begin{pmatrix} 1 & -1 \\ 1 & 1 \end{pmatrix}, \, 
J:=
\begin{pmatrix} 0 & 1 \\ -1 & 0\end{pmatrix},
\,
F :=
\begin{pmatrix} 0 & 1 \\  1 & 0 \end{pmatrix},
\,
I_{1,1} :=
\begin{pmatrix} -1 & 0 \\ 0 & 1 \end{pmatrix}.
\end{equation}
The matrix $I_{1,1}$ describes $(-1)_{O,W}(z)=-z$. 
The matrix $R$  corresponds, in the usual chart, to the transformation
$R.z = (z-1)(z+1)^{-1}$, sending
$0 \mapsto -1 \mapsto \infty \mapsto 1 \mapsto 0$, that is,
$O \mapsto B \mapsto W \mapsto F \mapsto O$; if fixes $N$ and $S$.
Taking $g = R$, we get
$$
(-1)_{B,F}.z= R \circ (-1)_{O,W} \circ R^{-1} .z= (R I_{1,1} R^{-1}).z = F.z = z^{-1} ,
$$
and from this,
$(-1)_{N,S}.z = (-1)_{i F,i B}.z =i (-1)_{F,B} (i^{-1}z) = i^2 z^{-1} = -z^{-1}$,
whence the description of the Klein $4$-group given in the first table.
Similarly, 
$$
i_{B,F}.z = 
R \circ i_{O,W} \circ R^{-1}(z) =
R (i (z+1)(1-z)^{-1}) =
(z+i)(iz+1)^{-1} ,
$$
and $i_{N,S}= i_{O,W}\circ i_{B,F}\circ i_{O,W}^{-1}$, whence
$$
i_{N,S} (z) =  i \, (-iz + i) (z+1)^{-1} = (z-1)(z+1)^{-1} = R(z) ,
$$
so $i_{N,S}=R$, and we get the formulae for the $4$-cycles.
Next, to describe the transpositions, we compute
$$
((-1)_{O,W} \circ i_{N,S})^2 = \bigl( (-1)_{O,W} \circ  i_{N,S} \circ (-1)_{O,W}\bigr)  \circ i_{N,S} =
i_{S,N} \circ  i_{N,S} = \id .
$$
In the same way, whenever $(u,v)$ and $(x,y)$ are two different pairs of opposite poles,
$((-1)_{u,v}\circ i_{x,y})^2 = \id$. This gives 12 elements of order two, but by relations already established, the number
reduces to 6 (cf.\ table).
Finally, the $3$-cycles are given by compositions of two transpositions which do not commute, for instance
$$
g:= (-1)_{F,B} \circ i_{W,O} \circ (-1)_{F,B}\circ i_{N,S} = i_{W,O} \circ i_{N,S} 
$$
 is indeed of order $3$.
To see this,  either compute the cube of its matrix, or use an argument following 
$((12)(23))^3 = ((12)(23)(12)) ((23)(12)(23))= (13)(13)=\id$. 
Concerning the antiholomorphic transformations, note that all matrices $M$ from the preceding tables are 
{\em unitary $2\times 2$ matrices  for the
Euclidean form on $\bA^2$}, i.e, they belong to the group 
\begin{equation}
\UU(2,\bA):= \bigl\{ M \in M(2,2;\bA) \mid \,  M^* M = 1 \bigr\} ,
\end{equation}
and hence commute with the orthcomplementation map $\zeta$ defined by this form. 
Since the real form $\tau^{NS}_O$ with respect to the Hermitian projective line is given by the skew-symmetric matrix $J$ (cf.\ Part I), it follows that
$\tau^{NS}_O= \zeta \circ J = \zeta \circ (-1)_{NS}$, that is, $\zeta = \tau^{NS}_O \circ (-1)_{NS}$.
The remaining formulae now follow from this.


\begin{thebibliography}{999999}

\bibitem[AS]{AS}
Alfsen, E. and F.W. Schultz, {\it State Spaces of Operator Algebras}, Birkh\"auser, Boston 2001
\bibitem[Be00]{Be00}
Bertram, W., {\it The Geometry of Jordan and Lie Structures}, Springer LNM 1754, Berlin 2000
\bibitem[Be02]{Be02}
Bertram, W.,
``Generalized projective geometries: General theory and equivalence with Jordan structures.''
 Advances in Geometry {\bf 3} (2002), 329-369.  
\url{http://agt2.cie.uma.es/%7Eloos/jordan/} (no. 90) 
\bibitem[Be08]{Be08}
Bertram, W., {\it Differential Geometry, Lie Groups and Symmetric Spaces over General Base Fields and Rings}, Mem.\ AMS {\bf 900},
Rhode Island, 2008
\bibitem[Be08a]{Be08a}
Bertram, W., 
``Is there a Jordan geometry underlying quantum physics?''  Int. J. of Theoretical Physics 47 (no. 2) (oct. 2008), 2754-2782    
\url{https://arxiv.org/abs/0801.3069}
\bibitem[Be08b]{Be08b}
Bertram, W., 
``On the Hermitian projective line as a home for the geometry of Quantum Theory.''
 In:AIP Conference Proceedings 1079, p. 14 - 25 (Proceedings XXVII Workshop on Geometrical Methods  in Physics, Bialowieza 2008), American Institute of Physics, New York 2008   
\url{https://arxiv.org/abs/0809.0561}
\bibitem[Be12]{Be12}
Bertram, W.,  ``The projective geometry of a group.''
 \url{https://arxiv.org/abs/1201.6201}
\bibitem[Be14]{Be14}
Bertram, W., 
``Jordan Geometries - an Approach via Inversions.''  Journal of Lie Theory 24 (2014) 1067-1113    
\url{http://arxiv.org/abs/1308.5888}
\bibitem[Be17]{Be17}
Bertram, W., 
``An essay on the completion of quantum theory. I: General setting.''
\url{https://arxiv.org/abs/1711.08643} (quoted as `Part I')
\bibitem[Be18]{Be18}
Bertram, W., 
``Lie Calculus.''  Proceedings of 50. Seminar Sophus Lie, Banach Center Publications.     
\url{https://arxiv.org/abs/1702.08282} 
\bibitem[Bexy]{Bexy}
Bertram, W. {\it Conceptual Differential Calculus}, book in preparation.
 20xy. 
\bibitem[BeKi1]{BeKi}
Bertram, W.,  and M.\ Kinyon, 
Associative Geometries. I: Torsors, Linear Relations and Grassmannians.
  Journal of Lie Theory 20 (2) (2010), 215-252.      
\url{https://arxiv.org/abs/0903.5441}
\bibitem[BeKi2]{BeKi2}
Bertram, W.,  and M.\ Kinyon, 
Associative Geometries. II: Involutions, the Classical Torsors, and their Homotopes .
 Journal of Lie Theory 20 (2) (2010), 253-282   
 \url{https://arxiv.org/abs/0909.4438}
\bibitem[BeNe]{BeNe}
Bertram, W., and K.-H.\ Neeb,
 ``Projective completions of Jordan pairs. Part II: Manifold structures and symmetric spaces``, Geometriae Dedicata {\bf 112} , 1, (2005), 73-113. 
 \url{https://arxiv.org/abs/math/0401236} 
\bibitem[E]{E}
Emch, G.,
{\it Mathematical and Conceptual Foundations of $20$th-Century Physics},
North Holland, 2000
\bibitem[GP]{GP76}
Grgin, E., and A. Petersen,  ``Algebraic Implications of Composability of Physical Systems'', 
Comm. math. Phys. 50 (1976), 177 ? 188
\bibitem[KW65]{KW65}
Koranyi, A., and J.\ Wolf, ``Realization of Hermitian symmetric spaces as generalized half-spaces'',
Ann.\ of Math., {\bf 81}, 265--288 (1965)
\bibitem[L98]{L98}
Landsmann, N.P., {\it Mathematical Topics Between Classical and Quantum Mechnics}, Springer, New York 1998
\bibitem[Lo75]{Lo75}
Loos, O., {\it Jordan Pairs}, Springer LNM 460, New York 1975
\bibitem[Lo77]{Lo77}
Loos, O., 
{\it Bounded symmetric domains and Jordan pairs},
Lecture Notes, Irvine 1977.
(electronic version on the Jordan archive: \url{http://agt2.cie.uma.es/%7Eloos/jordan/archive/irvine/index.html})
\bibitem[T]{T}
Townsend, Paul K., ``The Jordan formulation of Quantum Mechanics: a review''
(From: Supersymmetry, Supergravity and Related Topics, proceedings  World Scientific 1985)
\url{https://arxiv.org/abs/1612.09228}
\bibitem[Tak]{Tak08}
Takhtajan, L., {\it Quantum Mechanics for Mathematicians}, AMS Graduate Studies {\bf 95}, AMS Rhode Island 2008
\end{thebibliography}
 \end{document}